\def\F{Fr\'echet\xspace}
\def\Rtwo{\mathbb{R}^{2}}
\def\M{\mathcal{M}}
\def\S{\mathcal{S}}
\def\eps{\varepsilon}
\def\RR{\mathbb{R}}
\def\FF{\mathcal{F}}
\def\GG{\mathcal{G}}
\def\AA{\mathcal{A}}
\DeclareMathOperator*{\supp}{supp}
\DeclareMathOperator{\cost}{cost}
\DeclareMathOperator{\opt}{opt}
\DeclareMathOperator{\range}{range_\ge}
\title{On $k$-means for segments and polylines}
\author{Sergio Cabello}{Faculty of Mathematics and Physics, University of Ljubljana, Slovenia \and Institute of Mathematics, Physics and Mechanics, Slovenia}{sergio.cabello@fmf.uni-lj.si}{https://orcid.org/0000-0002-3183-4126}{Research partially supported by the Slovenian Research Agency (P1-0297, J1-2452, N1-0218, N1-0285)}
\author{Panos Giannopoulos}{Department of Computer Science, City, University of London, UK}{Panos.Giannopoulos@city.ac.uk}{https://orcid.org/0000-0002-6261-1961}{}
\authorrunning{S. Cabello and P. Giannopoulos}
\keywords{$k$-means clustering, segments, polylines, Hausdorff distance, \F mean}
\begin{document}

\maketitle

\begin{abstract}
We study the problem of $k$-means clustering in the space of straight-line segments in $\Rtwo$ under the Hausdorff distance. For this problem, we give a $(1+\epsilon)$-approximation algorithm that, for an input of $n$ segments, for any fixed $k$, and with constant success probability, runs in time 
$O(n+ \eps^{-O(k)} + \eps^{-O(k)}\cdot \log^{O(k)} (\eps^{-1}))$.
The algorithm has two main ingredients. Firstly, we express the $k$-means objective in our metric space as a sum of algebraic functions and use the optimization technique of Vigneron~\cite{Vigneron14} to approximate its minimum. Secondly, we reduce the input size by computing a small size coreset using the sensitivity-based sampling framework by Feldman and Langberg~\cite{Feldman11, Feldman2020}. Our results can be extended to polylines of constant complexity with a running time of $O(n+ \eps^{-O(k)})$.
\end{abstract}


\section{Introduction}
The $k$-\emph{means} clustering problem is as follows: Given a point set in a metric space, find a set of points, called \emph{centers}, such that the sum of the squared distances from each input point to its closest center is minimized (over all possible choices of centers). It is a fundamental algorithmic problem with a ubiquitous role in data analysis in numerous application domains. As such, it has been studied extensively in geometric and general metric spaces, under various constraints on the objective and the choice of centers, and with a focus on complexity lower and upper bounds and the quality of the (approximate) solution~\cite{Ahmadian20, Awasthi15, Bandyapadhyay16, Chakrabarty22, Chen09, Cohen-Addad18,  Cohen-AddadSTOC22, Cohen-AddadFS21, Cohen-AddadG0LL19, Cohen-AddadSODA21, Cohen-AddadSL22, cohen-addad2019, Cohen-AddadSS21, Feldman11, Feldman2020, GrandoniORSV22, Har-Peled04, KumarSS10, Megiddo84}.

In geometric settings, almost all previous work involves clustering points in some low- or high-dimensional Euclidean space. A notable exception is the relatively recent work on $k$-\emph{center}~\cite{BuchinDGHKLS19} and $k$-\emph{median}~\cite{BuchinDR21, ChengH23, DriemelKS16, NathT22} clustering for polygonal curves; for $k$-center, one seeks to minimize the maximum distance to the closest center, while for $k$-median, one seeks to minimize just the sum of the distances (instead of the sum of the squares) to the closest centers. We are not aware of any work on $k$-means clustering for more complex objects than points.
In this paper, we consider the $k$-means problem in the spaces of segments and of polylines of constant complexity in the plane with respect to the Hausdorff distance.

\subsection{Formalization of the problem}

Let $(\S, d_H)$ be the metric space of closed straight-line segments in $\Rtwo$, where $d_H$ is the Hausdorff distance. Given a set $S$ of $n$ weighted segments, where each $s\in S$ has an associated positive weight $w_s \in \RR_{>0}$, and for any $k$ segments $s_1,\dots,s_k$ playing the role of ``centers'' of the clusters, 
we define the objective function
\[
	\cost_S(\{s_1,\dots,s_k\}) ~:=~ \sum_{s\in S} w_s \cdot \min \{ d_H^2(s_1,s), \dots, d_H^2(s_k,s)\} 
\]
and define the $k$-means problem as the problem of finding a minimizer, i.e., a set of segments $S^*=\{ s^*_1,\dots,s^*_k \}$ that minimizes the above cost. Note here that quite often we deal with \emph{unweighted} input segments. However, for technical reasons (made clear later in our discussion) in order to incorporate coresets in our algorithm, we have to consider the more general case of weighted segments.
Also note that we study the \emph{continuous} version of the problem, where the solution segments can come from anywhere in $(\S, d_H)$.  
This is harder than the so-called \emph{discrete} version, where the solution segments have to be selected among the input segments.

We also consider the $k$-means problem for polylines, each with a bounded number of segments, under the Hausdorff distance, where the definition of the problem is analogous.

We remark here on an interesting connection to the older and closely related concept of the \emph{\F mean}~\cite{F48}. This is a generalization of the classic notion of mean or average to any abstract metric space. For a finite point set $P$ in a metric space $(\M, d)$, a \F mean is any minimizer of the so-called \emph{\F variance} 
$\cost_P(q) ~:=~ \sum_{p\in P} d^2(q,p)$, taken over all $q\in \M$. 
For Euclidean spaces, the \F mean is the usual arithmetic mean. (Other usual means can be recovered as \F means by considering other distances.) 
The \F mean is a well-studied concept in Statistics and in Riemannian spaces, where sometimes it is known as Karcher mean, see~\cite{Schotz21} for a general, comprehensive treatment.
Computing a \F-mean is precisely the $1$-means clustering problem while the $k$-means is the generalization where the cost of each cluster is given by the functional defining the \F mean.

\subsection{Results}

Our main result is a $(1+\eps)$-approximation algorithm for the $k$-means problem in $(\S, d_H)$. The algorithm runs in $O\left(\left(n+ \eps^{-16k+4-\eta} + \eps^{-12k-3} \log^{4k+1} (\eps^{-1})\right) (\log(1/\delta)\right)$ time, for any fixed $k$, any $\eta>0$, and with success probability at least $1-\delta$ (the constant hidden in the $O$-notation depends on $\eta$ and $k$). 

There are two main ingredients in our algorithm. For the first one, described in Section~\ref{sec:algebraic}, we express the $k$-means objective in the space 
$(\S, d_H)$ as a sum of algebraic functions of constant description complexity. This algebraic approach allows us to use the optimization technique of Vigneron~\cite{Vigneron14} for approximating the minimum. This is, to the best of our knowledge, the first application of this technique in the context of clustering. While this technique is very expensive when applied directly to the entire set of input segments, we can decrease the running time dramatically by combining it with coresets. This is the second ingredient of our algorithm, described in Section~\ref{sec:coreset}, namely, we use the sensitivity framework of Feldman and Langberg~\cite{Feldman11, Feldman2020} to compute a small coreset of the input and then we apply the former algebraic approach to the coreset.

We then extend this result to polylines of description complexity at most $\ell=O(1)$. In this context, each input polyline and each solution polyline has at most $\ell$ segments, but we may put in the solution polylines that are not part of the input. The running time becomes $O\left(\left(n+ \eps^{-O(k\ell)}\right)\log(1/\delta)\right)$.

We start with a side-result, given in Section~\ref{sec:example}, where we consider the \F mean (or $1$-means) problem in a concrete example with two perpendicular segments that intersect at their centers. We show that even in this simple setting the set of \F means is surprisingly complex. The optimum is attained in a 3-dimensional subset of the 4-dimensional parameter space needed to model the space of candidate segments. This example also prompts to the benefit of looking into an algebraic approach for the general setting.

\subsection{Related work}

For general metric spaces, $k$-means (as well as $k$-median) clustering is APX-hard (when $k$ is part of the input)~\cite{Cohen-AddadSODA21, GuhaK99}. Several polynomial-time, constant factor approximation algorithms are known for both the continuous and discrete versions of the problem~\cite{Ahmadian20, Chakrabarty22}. For the discrete version, there even exist algorithms that achieve factors arbitrarily close to the lower bound~\cite{GuhaK99} and run in FPT-time with respect to $k$ and the approximation error $\eps$~\cite{cohen-addad2019}.

The Euclidean $k$-means, where the input is a set of points in $\RR^d$, is NP-hard for $d=2$~\cite{Megiddo84} and APX-hard when $d = \omega(\log n)$~\cite{Awasthi15}. The problem admits EPTASs with respect to $k$ and $\eps$ ~\cite{KumarSS10} and with respect to $d$ and $\eps$~\cite{Cohen-Addad18, Cohen-AddadFS21}.
 
As for the \F mean, it has been considered for persistence diagrams~\cite{Mileyko11,TurnerMMH14}, 
point sets on the unit circle~\cite{CazalsDO21}, and in the space of graphs~\cite{FergusonM22,Kolaczyk2020,Meyer21}, to name a few metric spaces far from the Euclidean setting.


\subsection{Definitions and notation}

For each point $p\in \Rtwo$, we use $x(p)$ and $y(p)$ for its two coordinates. Thus, $p=(x(p),y(p))$.
For any two points $p,q\in \RR^2$, we denote by $pq$ the segment with endpoints $p$ and $q$, and by $|pq|$
the Euclidean distance between them: $|pq|^2=(x(p)-x(q))^2+(y(p)-y(q))^2$. For simplicity we assume that all input segments have positive length.

Recall that the Hausdorff distance $d_H(A,B)$ between any two closed subsets $A,B\subset \RR^2$ is defined by
\[
	d_H(A,B) = \max \Bigl\{ \max_{a\in A}\min_{b\in B} |ab| ,\, \max_{b\in B}\min_{a\in A} |ab| \Bigr\}.
\]
Define $\delta(a, B) = \min_{q\in B} |ab|$ for the (directed) distance from a point $a$ to a closed set $B$.
It is well known and easy to see that for any two segments  $s_1=a_1b_1$ and $s_2=a_2b_2$ in $\S$
\begin{equation}
\label{eq:Hausdorff_segm}
d_H(s_1, s_2) = \max\{\delta(a_1, s_2), \delta(b_1, s_2), \delta(a_2, s_1), \delta(b_2, s_1)\}.	
\end{equation}


\section{An example of \F mean in $(\S, d_H)$}
\label{sec:example}

Let $s_1 = a_1 b_1$, $s_2 = a_2 b_2$ be perpendicular segments, centered at the origin $o$, with $|s_1| = |s_2| = 2$; see Figure~\ref{fig:2Seg_not_unique}. We show that the set of \F means in this case
is 3-dimensional. Since a candidate \F mean is a segment, described by four parameters, this example shows that even in relatively simple cases the space of \F means may have a complex structure and the maximum is attained at a large subset that is described algebraically. 

Let $D$ be the disk centered at $o$ with radius $1$ and let $D_1,\dots,D_4$ be the disks centered at $a_1/2,a_2/2,b_1/2,b_2/2$
with radius $\tfrac 12$. The region $D\setminus(D_1\cup\dots \cup D_4)$ has four connected
regions. Let $A_i$ be the closure of the connected region in the $i$-th quadrant; 
see Figure~\ref{fig:2Seg_not_unique}.

\begin{theorem}\label{thm:non_unique}
A segment is a \F mean of $\{ s_1, s_2 \}$ if and only if it goes through the origin and has its endpoints
in different regions $A_1,\dots,A_4$.
\end{theorem}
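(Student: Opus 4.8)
The plan is to prove a matching lower bound $\cost(q)\ge 1$ and then read off the minimizers from the conditions under which every inequality in that bound is tight. Fix coordinates so that $a_1,b_1=(\mp 1,0)$ and $a_2,b_2=(0,\mp 1)$, and write $q$ for a candidate segment with endpoints $u,v$ and supporting line $\ell_q$; then $D_1,\dots,D_4$ are the disks of radius $\tfrac12$ centred at $a_1/2,a_2/2,b_1/2,b_2/2$. By \eqref{eq:Hausdorff_segm}, $\cost(q)=d_H^2(q,s_1)+d_H^2(q,s_2)$ is a sum of two maxima of four directed distances each, and this is the object I analyse.

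\emph{Lower bound.} Since $d_H(q,s_1)\ge\delta(a_1,q)$ and $d_H(q,s_1)\ge\delta(b_1,q)$ (both terms occur in the max in \eqref{eq:Hausdorff_segm}), and likewise for $s_2$, and since $\max\{x^2,y^2\}\ge\tfrac12(x^2+y^2)$, I obtain $\cost(q)\ge\tfrac12\sum_{p}\delta^2(p,q)$, the sum ranging over the four endpoints $p\in\{a_1,b_1,a_2,b_2\}$. Bounding each point--segment distance below by the point--line distance, $\delta(p,q)\ge\operatorname{dist}(p,\ell_q)$, reduces the quantity to the least-squares sum $\sum_p\operatorname{dist}^2(p,\ell_q)$. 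A direct second-moment computation gives $\sum_p\operatorname{dist}^2(p,\ell)=2+4\operatorname{dist}^2(o,\ell)$ for every line $\ell$, since the four points have centroid $o$ and isotropic covariance $\sum_p pp^{\!\top}=2I$. Hence $\cost(q)\ge 1$, and the first equality condition a minimizer must satisfy is that $\ell_q$ passes through $o$.

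\emph{Equality analysis.} It remains to characterise when all the inequalities are simultaneously tight. By the symmetry group of the configuration (the reflections in the two axes and the quarter-turn) I may assume $\ell_q$ has direction $(\cos\theta,\sin\theta)$ with $\theta\in[0,\tfrac\pi2]$, and write $u=u_0(\cos\theta,\sin\theta)$, $v=v_0(\cos\theta,\sin\theta)$. Tightness of $\delta(p,q)\ge\operatorname{dist}(p,\ell_q)$ for the four points means that each perpendicular foot lands on the segment $q$; computing the feet (at parameters $\mp\cos\theta$ and $\mp\sin\theta$) forces $|u_0|,v_0\ge\max\{\cos\theta,\sin\theta\}$ and in particular $u_0<0<v_0$, i.e.\ the segment genuinely contains $o$. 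Tightness of the averaging step needs $\delta(a_i,q)=\delta(b_i,q)$, which here is automatic (both equal $\sin\theta$, respectively $\cos\theta$), together with the requirement that these endpoint-of-$s_i$ terms dominate the two Hausdorff maxima; since $\delta(u,s_1)=|u_0|\sin\theta,\ \delta(v,s_1)=v_0\sin\theta$ and symmetrically $\delta(u,s_2)=|u_0|\cos\theta,\ \delta(v,s_2)=v_0\cos\theta$, domination is equivalent to $|u_0|,v_0\le 1$. Thus equality holds exactly when $\max\{\cos\theta,\sin\theta\}\le|u_0|,v_0\le 1$, and a short check ($v\notin D_3\cup D_4$ is $v_0\ge\max\{\cos\theta,\sin\theta\}$, and $v\in D$ is $v_0\le1$) shows this is precisely the condition that $v\in A_1$ and $u\in A_3$. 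Unwinding the symmetry reduction gives the statement for all directions.

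The step I expect to be the main obstacle is exactly this equality bookkeeping: one must verify that each of the four perpendicular feet and each of the four endpoint-to-opposite-segment distances behaves as claimed on the relevant ranges, and then confirm that the resulting inequalities collapse onto the algebraic description of $A_1,\dots,A_4$ and not onto a slightly larger or smaller set. Care is also needed at the boundary cases: the axis directions $\theta\in\{0,\tfrac\pi2\}$, where $A_1$ degenerates and the endpoints are pushed onto the unit circle (recovering $q=s_1$ or $q=s_2$), and the possibility of $o$ being an endpoint, which the foot-on-segment conditions exclude for minimizers. Once these are settled, the equivalence between the chain of equalities and membership of the two endpoints in distinct regions $A_i$ is immediate, which is the assertion of the theorem.
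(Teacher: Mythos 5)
Your proof is correct, and it takes a genuinely different route from the paper's. The paper first gets the upper bound $d_H^2(s_1,s_1)+d_H^2(s_1,s_2)=1$ from the trivial candidate $s_1$, then uses a separate translation lemma (Lemma~\ref{le:abab}: recentring one segment at the centre of the other never increases the Hausdorff distance, strictly decreasing it unless the centres already lie on the respective segments) to force any minimizer through the origin, and finally walks an endpoint along the supporting line, tracking where each directed distance changes regime, to identify the regions $A_i$. You instead prove the global lower bound
\[
d_H^2(q,s_1)+d_H^2(q,s_2)\;\ge\;\tfrac12\sum_{p\in\{a_1,b_1,a_2,b_2\}}\delta^2(p,q)\;\ge\;\tfrac12\sum_{p}\operatorname{dist}^2(p,\ell_q)\;=\;1+2\operatorname{dist}^2(o,\ell_q)\;\ge\;1,
\]
via $\max\{x^2,y^2\}\ge\tfrac12(x^2+y^2)$ and the second-moment identity for the four endpoints (centroid $o$, covariance $2I$), and then read off the minimizers as exactly the segments for which every link in this chain is tight. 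This buys two things: the necessity of passing through the origin falls out of the lower bound itself, so Lemma~\ref{le:abab} is not needed, and the ``if and only if'' is obtained in one sweep rather than by separate achievability and exclusion arguments. Your tightness conditions (all four perpendicular feet on the segment, giving $|u_0|,v_0\ge\max\{\cos\theta,\sin\theta\}$, and domination of the endpoint terms, giving $|u_0|,v_0\le 1$) do collapse onto the algebraic description of $A_1,\dots,A_4$, degenerate axis directions included; the one spot needing a touch more care is the converse of the domination step, where for $|u_0|>1$ you should argue $\delta^2(u,s_1)+\delta^2(u,s_2)\ge u_0^2>1$ directly from the line distances rather than from the foot-on-$s_i$ formula (which may not apply there) --- but that is precisely the routine verification you flag. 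The paper's proof is more pictorial and its translation lemma is a reusable tool; yours is shorter and more self-contained.
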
 

Before proving the theorem, we show the following technical statement.

\begin{lemma}
\label{le:abab}
Consider any two segments $ab$ and $a'b'$ in the plane. 
Let $a''b''$ be the segment obtained by translating $a'b'$ such that its center coincides with the center of $ab$.
Then $d_H(ab,a''b'')\le d_H(ab,a'b')$.
Furthermore, if the center of $ab$ does not lie on $a'b'$ and the center of $a'b'$ does not lie on $ab$,
then $d_H(ab,a''b'')< d_H(ab,a'b')$.
\end{lemma}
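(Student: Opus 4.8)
The plan is to exploit the central symmetry that the translation creates. I would place the common center of $ab$ and $a''b''$ at the origin, writing $ab$ with endpoints $\pm u$ and $a''b''$ with endpoints $\pm v$, so that both segments are fixed setwise by the point reflection $\rho(x)=-x$; the original segment $a'b'$ then has center $c'$ and endpoints $c'\pm v$ (the direction $v$ is shared, since $a''b''$ is only a translate of $a'b'$). Because $\rho$ is an isometry fixing $ab$, the reflected segment $\bar s:=\rho(a'b')$, with endpoints $-c'\pm v$, satisfies $d_H(ab,\bar s)=d_H(ab,a'b')=:R$. I can therefore treat both $a'b'$ and $\bar s$ as segments lying at Hausdorff distance exactly $R$ from $ab$.

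By \eqref{eq:Hausdorff_segm}, $d_H(ab,a''b'')$ is the maximum of four endpoint-to-segment distances, but the shared symmetry gives $\delta(u,a''b'')=\delta(-u,a''b'')$ and $\delta(v,ab)=\delta(-v,ab)$, so it suffices to bound $\delta(u,a''b'')$ and $\delta(v,ab)$ by $R$. The tool I would use throughout is the median identity $|p-\tfrac{X+Y}{2}|^2=\tfrac12|p-X|^2+\tfrac12|p-Y|^2-\tfrac14|X-Y|^2$. The crucial structural observation is that the midpoint of any point of $a'b'$ and any point of $\bar s$ lies on $a''b''$: in the averaged point the $\pm c'$ cancel and the averaged direction-parameter stays in $[-1,1]$.

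For the first term, let $P_1\in a'b'$ and $P_2\in\bar s$ realize $\delta(u,a'b')$ and $\delta(u,\bar s)$; both are at most $R$, since each appears among the endpoint distances defining $R$ for $a'b'$ and for $\bar s$. Then $\tfrac12(P_1+P_2)\in a''b''$, so $\delta(u,a''b'')\le|u-\tfrac12(P_1+P_2)|$, and the median identity bounds the square of the right-hand side by $R^2-\tfrac14|P_1-P_2|^2\le R^2$. The second term is handled symmetrically, using $v=\tfrac12(a'+\bar b)$ together with $\delta(a',ab),\delta(\bar b,ab)\le R$, projecting $a'$ and $\bar b$ onto $ab$ and averaging the projections. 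This already establishes the non-strict inequality.

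For the strict statement I would retain the term that was dropped. A short computation gives $P_1-P_2=2c'+\lambda v$ with $\lambda\in[-2,2]$, which vanishes exactly when $c'=-sv$ for some $s\in[-1,1]$, i.e.\ when the center of $ab$ (the origin) lies on $a'b'$; hence $c\notin a'b'$ forces the first term strictly below $R$. Symmetrically, the gap for the second term is $2c'-\mu u$ with $\mu\in[-2,2]$, vanishing precisely when $c'=\tfrac{\mu}{2}u\in ab$, i.e.\ when the center of $a'b'$ lies on $ab$, so $c'\notin ab$ makes the second term strict as well. I expect this last step to be the main obstacle, since it is the only place where the two hypotheses are genuinely used, and it requires correctly matching each vanishing gap term to the corresponding \emph{center-on-the-other-segment} condition; everything preceding it is a clean consequence of central symmetry and the median identity.
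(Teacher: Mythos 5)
Your proof is correct, but it follows a genuinely different route from the paper's. The paper fixes the farther original endpoint, say $\delta(a',ab)\ge\delta(b',ab)$, argues geometrically (essentially from the convexity and central symmetry of the neighbourhood $\{p:\delta(p,ab)\le\delta(a',ab)\}$ of $ab$, read off a figure) that $\delta(a'',ab)=\delta(b'',ab)\le\delta(a',ab)$ with equality only if the center of $a'b'$ lies on $ab$, and then exchanges the roles of the two segments to handle the remaining two terms of \eqref{eq:Hausdorff_segm}. You instead reflect $a'b'$ through the center of $ab$, observe that $a''b''$ is exactly the set of midpoints of pairs of points taken from $a'b'$ and its reflection, and bound each of the four terms of \eqref{eq:Hausdorff_segm} via the median (parallelogram) identity, which leaves an explicit nonnegative defect ($\tfrac14|P_1-P_2|^2$ for the first pair, $\tfrac14|2c'-(Q_1-Q_2)|^2$ for the second). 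What your route buys is a fully computational equality analysis: each defect can vanish only when $c'$ is of the form $\lambda v$ with $\lambda\in[-1,1]$ (equivalently, the center of $ab$ lies on $a'b'$), respectively only when $c'\in ab$, and this one\-/directional implication is all the strict case needs --- no appeal to a picture, and even a quantitative lower bound on the improvement. What it costs is a slightly weaker intermediate estimate (you compare each translated term to $d_H(ab,a'b')$ rather than to the maximum of the two relevant original endpoint distances, which suffices for the lemma) and more bookkeeping; as you anticipated, the delicate step is matching each vanishing defect to the correct one of the two hypotheses, and you have that matching right.
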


\begin{proof}

\begin{figure}
\centering
	\includegraphics[width=11cm]{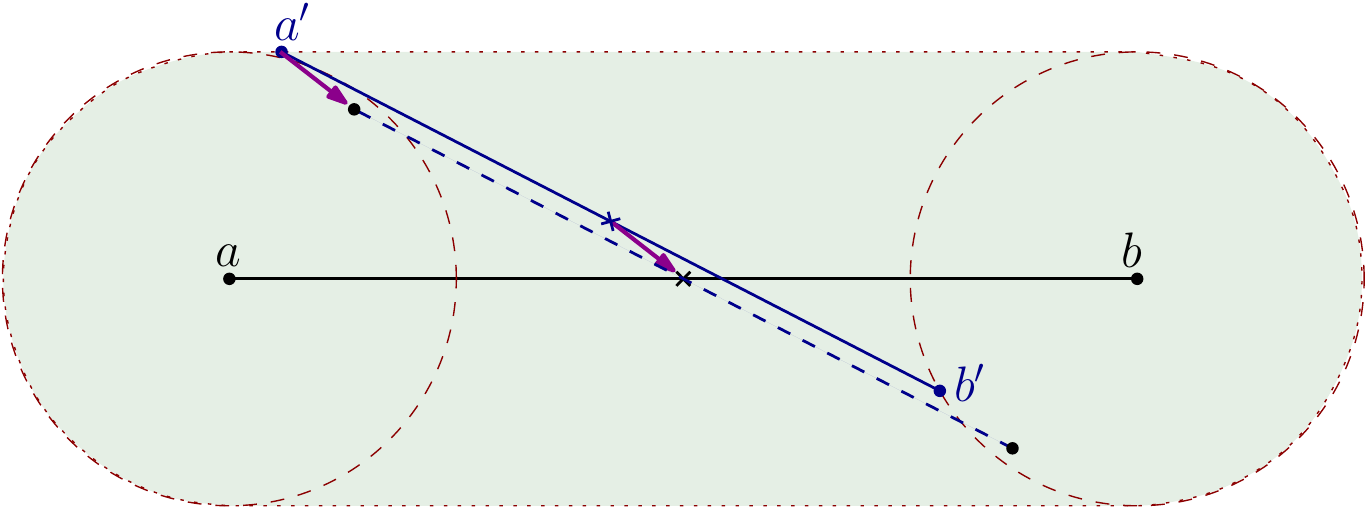}
	\caption{Translating $a'b'$ so that the centers of $ab$ and $a'b'$ coincide. The shaded region contains the points of the plane at distance at most $\delta(a',ab)$ from $ab$.}
	\label{fig:abab}
\end{figure}

Without loss of generality we may assume that $\delta(a',ab)\ge \delta(b',ab)$. 
See Figure~\ref{fig:abab}.
Note that 
\[
	\delta(a'',ab) \le \delta(a',ab),
\]
and equality can occur only when the center of $a'b'$ lies on $ab$.
Because of symmetry we have $\delta(a'',ab)=\delta(b'',ab)$.
Therefore 
\[
	\max\{ \delta(a'',ab), \delta(b'',ab)\} = \delta(a'',ab)
	\le \delta(a',ab) = \max\{ \delta(a',ab), \delta(b',ab)\},
\]
and equality can occur only when the center of $a'b'$ lies on $ab$.
Exchanging the roles of $ab$ and $a'b'$, we also have 
\[
	\max\{ \delta(a,a''b''), \delta(b,a''b'')\} \le \max\{ \delta(a,a'b'), \delta(b,a'b')\},
\]
and equality can occur only when the center of $ab$ lies on $a'b'$.
We then have
\begin{align*}
	d_H(ab,a'b') &= \max\{ \delta(a,a'b'),\delta(b,a'b'),\delta(a',ab),\delta(b',ab)\}\\
	&\geq \max\{ \delta(a,a''b''),\delta(b,a''b''),\delta(a'',ab),\delta(b'',ab)\} \\
	&= d_H(ab,a''b'').
\end{align*}
Moreover, when the center of $ab$ does not lie on $a'b'$ and the center of $a'b'$ does not lie on $ab$,
the inequality is strict. 
\end{proof}

\begin{proof}[Proof of Theorem~\ref{thm:non_unique}]

\begin{figure}
\centering
	\includegraphics[width=7.5cm]{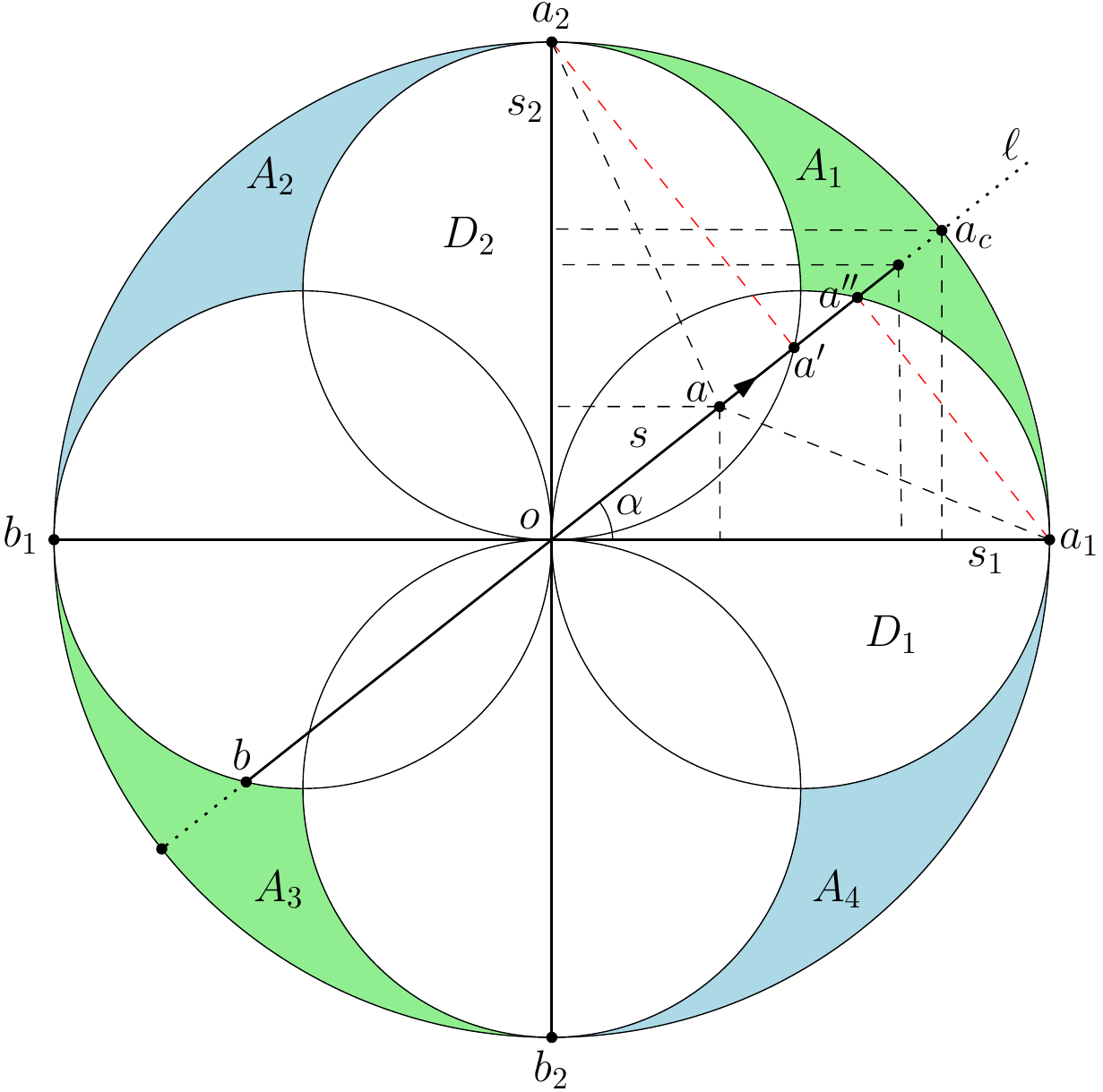}
	\caption{For the set of segments $S = \{ s_1, s_2 \}$, there is an infinite family of \F means.}
	\label{fig:2Seg_not_unique}
\end{figure}

Note that $d_H(s_1, s_2) = 1$. Therefore, for any solution (\F mean) segment $s$ we have $d^2_H(s, s_1) + d^2_H(s, s_2) \leq d^2_H(s_1, s_1) + d^2_H(s_1, s_2) = 0 + d^2_H(s_1, s_2) = 1$.

Observe that any solution segment $s$ must pass through the origin. 
Indeed, if such a solution $s$ does not pass through the origin, then the center of $s$
does not lie on $s_1$ or it does not lie on $s_2$.
It follows from Lemma~\ref{le:abab} that, by translating $s$ such that its center coincides with the origin, both $d_H(s, s_1)$ and $d_H(s, s_2)$ do not increase,
and at least one of them strictly decreases. Therefore, $d^2_H(s, s_1)+ d^2_H(s, s_2)$ strictly decreases by translating $s$ such that its center coincides with the origin.

Next, consider a solution segment $s = ab$ that passes through the origin $o$ with its endpoints $a, b$ in the first and third quadrant, respectively. See Figure~\ref{fig:2Seg_not_unique}. Let $a_c$ and $a'$ be the points of intersection of the supporting line $\ell$ of $s$ with the boundary of $D$ and the boundary of $D_2$ respectively.
Note that $\angle o a' a_2$ is right.
As $a$ moves from $o$ to $a'$, we have $\delta(a_2, s) = |a_2 a| > \delta(a, s_2)$ and the unique minimum of $|a_2 a|$ is attained at $a=a'$. 
When $a$ moves from $a'$ to $a_c$, $\delta(a_2, s)=|a_2a'|$ is constant. At $a=a_c$ we have that $\delta(a_2, s) = \delta(a_c, s_2)$, and beyond that $\delta(a_2, s) < \delta(a, s_2)$. 
Similarly, the minimum value of $\delta(a_1, s)$ is $|a_1a''|$, where $a''$ is the projection of $a_1$ onto $\ell$ (the intersection of $\ell$ with the boundary of $D_1$). We also have that $\delta(a_1, s) = \delta(a_c, s_1)$ when $a$ lies between $a''$ and $a_c$, and beyond $a_c$, $\delta(a_1, s) < \delta (a, s_1)$. 

Assume that $a''\in a'a_c$, as in Figure~\ref{fig:2Seg_not_unique}; the other case is symmetric.
Using $\alpha:=\angle aoa_1$, for every position of $a$ in the segment $a''a_c$ we have 
\[
\delta^2(a, s_1) +  \delta^2(a, s_2) ~\le~ \delta^2(a_1, s) +  \delta^2(a_2, s) ~=~ |a_1a''|^2 + |a_2a'|^2 ~=~ \sin^2\alpha + \sin^2(\pi/2 - \alpha) ~=~ 1.
\] 
For every position of $a$ in the segment $oa''$ we have 
$\delta^2(a_1, s) +  \delta^2(a_2, s) ~>~ |a_1a''|^2 + |a_2a'|^2 ~=~  1$,
while for every position of $a$ past $a_c$ we have that
$\delta^2(a, s_1) +  \delta^2(a, s_2) > \delta^2(a_1, s) + \delta^2(a_2, s) = 1$.

For the endpoint $b$ of $s$ in the third quadrant, the situation is symmetric. Using that the Hausdorff distance is attained at some endpoint, 
we get that each segment with an endpoint in $A_1$ and an endpoint in $A_3$ gives the minimum possible value of $d^2_H(s, s_1) + d^2_H(s, s_2)\ge 1$, and therefore is a \F mean. The situation is symmetric for the other quadrants.
\end{proof}

\section{An algebraic approach to $k$-means in $(\S, d_H)$}
\label{sec:algebraic}

We use the following adaptation of the definition of a \emph{nice} family of functions by Vigneron~\cite[Section 2.1]{Vigneron14}.
Let $\FF=\{ f_i:\RR^d \rightarrow \RR \mid i\in I\}$ be a finite family of functions, where $I$ is some index set.
We say that $\FF$ is \emph{nice} if there exists a constant $\lambda>d>0$ such that:
\begin{itemize}
\item each $f_i\in \FF$ is nonnegative and bounded;
\item for each $f_i\in \FF$, there exists a semialgebraic set $\supp(f_i)\subseteq \RR^d$
	and an algebraic function $g_i$ of degree at most $\lambda$ with
	$f_i(x)=g_i(x)$ for $x\in \supp(f_i)$ and $f_i(x)=0$ for $x\notin \supp(f_i)$;
\item for each $f_i\in \FF$, the semialgebraic set $\supp(f_i)\subseteq \RR^d$ is  
	a boolean combination of at most $\lambda$ subsets of $\RR^d$,
	each of them defined by an polynomial inequality of degree at most $\lambda$;
\item for each $f_i\in \FF$, the restriction of $f_i$ to $\supp(f_i)$ is continuous.
\end{itemize}
Note that the definition allows that the sets $\supp(f_i)$ are open, closed or mixed.
It also allows that $f_i$ is discontinuous in $\RR^d\setminus\supp(f_i)$, which
may include the boundary of $\supp(f_i)$ in some cases.

Our use of this concept will be through the following result for computing an approximation to the minimum of 
the function $\sum_i f_i$.

\begin{theorem}[Adaptation of Theorem 3.4 in Vigneron~\cite{Vigneron14}]
\label{thm:Vigneron}
	Assume that $\eps \in (0,1)$.
	Let $\FF=\{ f_i:\RR^d \rightarrow \RR \mid i\in I\}$ be a nice family of $m$ functions.
	Define $g=\sum_{i\in I} f_i$ and assume that $\min_{x\in \RR^d} g(x)$ exists. 
	Then we can compute a point $x'_\eps \in \RR^d$ such that
	$g(x'_\eps)\le (1+\eps) \min_{x\in \RR^d} g(x)$ 
	in time $O(m^{2d-2+\eta} + (m/\eps)^{d+1} \log^{d+1}(m/\eps))$
	for any $\eta>0$.\footnote{As noted in Vigneron~\cite{Vigneron14}, one needs to assume either the Real-RAM model of computation (which is standard in computational geometry) or a model where we can choose the precision of the intermediate computations, and then the computational complexity of the algorithm depends on the desired precision.} 
	The constant hidden in the $O$-notation depends on $\eta$ and on $d$.	
\end{theorem}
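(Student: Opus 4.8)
The plan is to reduce the statement to Vigneron's Theorem~3.4 in~\cite{Vigneron14} by verifying that every family $\FF$ that is nice in the (slightly relaxed) sense defined above also satisfies the hypotheses required there, so that his algorithm and its running-time analysis apply after only a careful matching of definitions; the fact that the time bound in the statement coincides with his is a strong hint that no new algorithmic idea is needed. First I would place our four bullet points next to Vigneron's definition of a nice family and check them one by one: nonnegativity, boundedness, the decomposition of each $f_i$ into an algebraic function $g_i$ of degree at most $\lambda$ on a semialgebraic support together with the zero function off the support, and the bound of $\lambda$ on both the number of defining inequalities of $\supp(f_i)$ and on their degrees. These conditions coincide with Vigneron's up to the constant $\lambda$, so all the combinatorial and algebraic complexity parameters feeding into his bound $O(m^{2d-2+\eta}+(m/\eps)^{d+1}\log^{d+1}(m/\eps))$ are unchanged.

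The one genuine point of departure is continuity: our definition only asks that the restriction of $f_i$ to $\supp(f_i)$ be continuous, explicitly allowing $f_i$ to jump across the boundary of its support and allowing the supports to be open, closed, or mixed. The second step is therefore to locate every place in Vigneron's argument where global continuity or closedness of supports is used, and to argue that it can be dispensed with. The natural way to do this is to recall that his algorithm works within the arrangement $\AA$ of the zero sets of all the polynomials defining the supports: inside each relatively open cell of $\AA$ every defining polynomial has constant sign, so the set of indices $i$ with cell $\subseteq \supp(f_i)$ is constant, each such $f_i$ agrees with its algebraic function $g_i$, and hence $g$ restricted to the cell is a sum of algebraic functions that is continuous there. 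Thus the relaxed continuity assumption is automatically met cell by cell, which is all that the analysis of a single cell requires.

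What remains --- and what I expect to be the main obstacle --- is to ensure that the discontinuities at cell boundaries are handled consistently, so that the algorithm neither evaluates $g$ using the wrong algebraic piece at a candidate point nor overlooks the true optimum. Since $\min_x g(x)$ is assumed to exist, it is attained at some face of $\AA$, possibly a lower-dimensional boundary face on which some supports are ``on'' and others ``off'' according to the strict or non-strict status of their defining inequalities. I would address this by requiring, first, that the algorithm's search not be restricted to the interiors of full-dimensional cells but examine faces of $\AA$ of every dimension, and second, that whenever $g$ is evaluated at a candidate point $x$ the membership $x\in\supp(f_i)$ be decided from the exact inequalities of the third bullet. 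Under these conventions the value the algorithm assigns to each candidate equals the true $g$ at that point, and the attained optimum sits on a face that is itself a candidate, so Vigneron's $(1+\eps)$-approximation guarantee carries over unchanged. Verifying that this bookkeeping over mixed open and closed supports is consistent throughout his cell-by-cell analysis, rather than re-deriving any running-time bound, is the delicate part of the argument.
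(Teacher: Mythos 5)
The paper does not actually prove this statement: Theorem~\ref{thm:Vigneron} is imported as a black box from Vigneron's Theorem~3.4, with only the definition of a nice family restated (and slightly relaxed) and a footnote about the model of computation, so there is no in-paper proof to compare yours against step by step. Your plan --- reduce to Vigneron's theorem by matching definitions --- is exactly the route the paper takes implicitly, and you correctly isolate the one substantive point such a reduction must address: the present definition permits each $f_i$ to be discontinuous off $\supp(f_i)$ and permits the supports to be open, closed or mixed, so the minimum of $g=\sum_i f_i$ can be attained only on a lower-dimensional face of the arrangement of the support-defining polynomials, where the active algebraic branch of some $f_i$ differs from that on every adjacent full-dimensional cell. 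Your proposed remedy (have the algorithm consider faces of every dimension and decide membership in each $\supp(f_i)$ from the exact defining inequalities when evaluating a candidate) is the right one. The only criticism is that your argument remains a plan rather than a completed verification: the steps you yourself flag as delicate --- locating every use of global continuity in Vigneron's two-stage algorithm (the coarse approximation via the arrangement and the subsequent $\eps$-refinement) and confirming that the bookkeeping over mixed open and closed supports is consistent throughout --- are announced but not carried out. In the context of this paper, which offers no justification at all for the adaptation, your level of detail already exceeds what the authors provide; as a standalone proof it would still need that final pass through Vigneron's analysis.
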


Let us first consider the simpler case of two segments and how their Hausdorff 
distance is defined.
We parameterize a segment $ab$ as the point 
$(x(a),y(a),x(b),y(b))$ in $\RR^4$.
Note that in this parameterization we have the artifact that
the segments $ab$ and $ba$ give different points in $\RR^4$.

Let $\ell_{ab}$ be the line supporting a segment $ab$.
For a point $p$, the distance $\delta(p,ab)$ is given by one of the three terms $|pa|$, $|pb|$, or $\delta(p,\ell_{ab})$.
For a point $q\in \Rtwo$ and
a segment $ab$, let $\ell_\perp(q,ab)$ be the line perpendicular to $\ell_{ab}$ through $q$.
The lines $\ell_\perp(a,ab)$ and $\ell_\perp(b,ab)$ 
partition the plane into three 2-dimensional faces (Figure~\ref{fig:algebra2}) with closures
\begin{align*}
	\sigma(ab)&= \text{the closed slab between $\ell_\perp(a,ab)$ and $\ell_\perp(b,ab)$},      \\
	\tau(a,ab)&= \text{the closed halfspace defined by $\ell_\perp(a,ab)$ that does not contain $b$}, \\
	\tau(b,ab)&= \text{the closed halfspace defined by $\ell_\perp(b,ab)$ that does not contain $a$}.
\end{align*}
We then have
\[
	\delta(p,ab) ~=~\begin{cases}
				|pa| &\text{if $p\in \tau(a,ab)$,}\\
				|pb| &\text{if $p\in \tau(b,ab)$,}\\
				\delta(p,\ell_{ab})&\text{if $p\in \sigma(ab)$}.
			\end{cases}
\]
From Equation~\eqref{eq:Hausdorff_segm}, we conclude that, for any two segments $ab$ and $a'b'$, the distance $d_H(ab,a'b')$ is given by one
of the functions in the family
\[
	\FF(ab,a'b')~:=~
		\bigl\{ |aa'|,~ |ab'|,~ |ba'|,~ |bb'|,~
			\delta(a,\ell_{a'b'}),~ \delta(b,\ell_{a'b'}),~ \delta(a',\ell_{ab}),~ \delta(b',\ell_{ab})\bigr\}. 
\]

\begin{figure}
\centering
	\includegraphics[page=2]{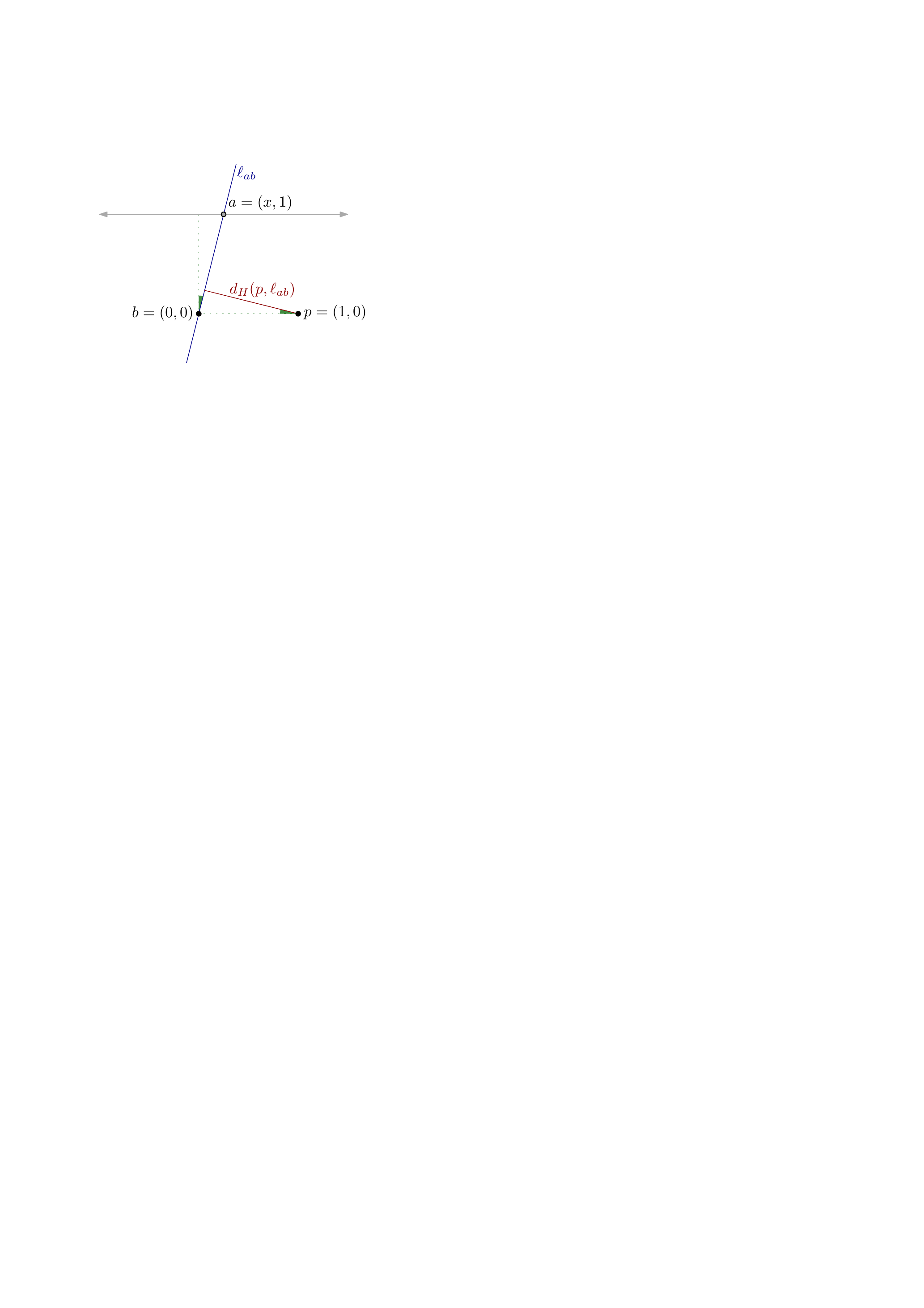}
	\caption{The regions $\sigma(ab)$, $\tau(a,ab)$ and $\tau(b,ab)$.}
	\label{fig:algebra2}
\end{figure}

We next argue that all the expressions involved are algebraic.
A point $p$ lies on the line $\ell_\perp(a,ab)$ if and only 
the scalar product of the vectors $\vec{ap}$ and $\vec{ab}$ is zero.
This is equivalent to $\bigl( x(p),y(p),x(a),y(a),x(b),y(b)\bigr)$ 
being a zero of the algebraic (actually polynomial) function
\[
	\psi(x,y,x_a,y_a,x_b,y_b) := (x-x_a)(x_b-x_a)+(y-y_a)(y_b-y_a).
\]
The sign of this expression also tells us which side of $\ell_\perp(a,ab)$ the point $p$ lies on.
Note that this function is linear in $x$ and $y$, while it is quadratic in $x_a$ and $y_a$.
Symmetrically, the sign of $\psi\bigl( x(p),y(p),x(b),y(b),x(a),y(a) \bigr)$ tells us
which side of $\ell_\perp(b,ab)$ point $p$ lies on.

In the following, we will treat the segment $a'b'$ as variable, identified with $\RR^4$,
while the segment $ab$ will be fixed.
We will show that the space $\RR^4$ can be decomposed into cells such that, within a cell, the distance $d_H(ab,a'b')$ is defined always by the same function from $\FF(ab,a'b')$. 
Such a decomposition is given by the eight algebraic hypersurfaces describing
the conditions
\begin{align*}
	&a'\in \ell_\perp(a,ab),~ b'\in \ell_\perp(a,ab),~ a'\in \ell_\perp(b,ab),~ b'\in \ell_\perp(b,ab), \\
	&a\in \ell_\perp(a',a'b'),~ b\in \ell_\perp(a',a'b'),~ a\in \ell_\perp(b',a'b'),~ b\in \ell_\perp(b',a'b'),
\end{align*}
together with a set of hypersurfaces, ``bisectors'', each defined by the set of points where two appropriate functions from $\FF(ab,a'b')$ meet; this will become clear shortly.
Finally, we note that each function in $\FF(ab,a'b')$ is algebraic of constant degree;
for example, elementary algebra shows that
\[
	\delta^2(a',\ell_{ab}) ~=~ 
		\frac{\Bigl( \bigl( x(b)-x(a)\bigr) \bigl( y(a)-y(a') \bigr)-\bigl( x(a)-x(a')\bigr) \bigl( y(b)-y(a) \bigr)\Bigr)^2}{\bigl( x(a)-x(b)\bigr)^2 + \bigl(y(a)-y(b)\bigr)^2} .
\] 

We parameterize the space of (sequences of) $k$ segments $a_1b_1,\dots, a_kb_k$ (the $k$ candidate cluster centers) by the point 
\[
	\bigl( x(a_1),y(a_1),x(b_1),y(b_1),\dots, x(a_k),y(a_k),x(b_k),y(b_k)\bigr)\in \RR^{4k}.
\]
Similarly, each $z\in \RR^{4k}$ defines a $k$-tuple of segments with
$s_1(z)=a_1(z)b_1(z),\dots, s_k(z)=a_k(z)b_k(z)$ by taking the inverse of the parameterization.

\begin{theorem}
\label{thm:nice_family}
	Let $k$ be a fixed, positive integer and let $s$ be a segment in the plane.
	In $O(1)$ time we can construct  
	a nice family $\FF_s=\{ f:\RR^{4k} \rightarrow \RR\}$ of $O(1)$ functions such that
	\[
		\forall z\in \RR^{4k}: ~~~ \sum_{f\in \FF_s} f(z) = \min_{i\in [k]} d^2_H(s,s_i(z)).
	\]
\end{theorem}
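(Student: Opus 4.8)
The plan is to build $\FF_s$ by first understanding, for a single fixed center, the decomposition of its $4$-dimensional parameter space, and then combining the $k$ such decompositions together with the regions that select the closest center. Fix the segment $s$. For a single variable center $s_i$, the discussion preceding the theorem shows that $\RR^4$ (the parameter space of $s_i$) is subdivided by the eight perpendicularity hypersurfaces and the bisector hypersurfaces into $O(1)$ cells such that, on each cell, $d_H^2(s,s_i)$ is given by a single function from $\FF(s,s_i)$; each such function is algebraic of constant degree. I would lift this subdivision to $\RR^{4k}$ through the projection $\pi_i\colon\RR^{4k}\to\RR^4$ onto the $i$-th block of four coordinates, obtaining for each $i\in[k]$ a cylindrical subdivision of $\RR^{4k}$ of $O(1)$ complexity on whose cells $z\mapsto d_H^2(s,s_i(z))$ is a single constant-degree algebraic function.

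To convert the minimum into a sum, I would take the common refinement of these $k$ subdivisions, further cut by the \emph{argmin} hypersurfaces $\{ z : d_H^2(s,s_i(z)) = d_H^2(s,s_j(z)) \}$ for all pairs $i\neq j$. On each cell $C$ of this refinement both the identity of a closest center and the active distance formula for that center are constant, so $z\mapsto \min_{i\in[k]} d_H^2(s,s_i(z))$ coincides on $C$ with a single algebraic function $h_C$ of constant degree. Because $k$ and the per-block complexities are constants, the refinement has $O(1)$ cells, each a boolean combination of $O(1)$ polynomial inequalities of constant degree, and it is computable in $O(1)$ time. For each cell $C$ I would introduce one function $f_C\in\FF_s$ with $\supp(f_C)=C$, setting $f_C=h_C$ on $C$ and $f_C=0$ outside. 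Since the cells partition $\RR^{4k}$, every $z$ lies in exactly one cell, so $\sum_C f_C(z) = h_{C(z)}(z) = \min_i d_H^2(s,s_i(z))$, giving the identity with no double counting at shared boundaries. Each $f_C$ is nonnegative (a squared distance), its restriction to $C$ is continuous — note that the only denominator that appears, the $|s_i|^2$ of the point-to-line terms, is nonzero precisely on the cells where that formula is active, since a degenerate center makes an endpoint term the relevant one — and its degree and number of defining inequalities are bounded by a constant $\lambda=\lambda(k)$.

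The step I expect to be the main obstacle is the \emph{boundedness} requirement in the definition of a nice family: the squared Hausdorff distance grows without bound as $z\to\infty$, so the $f_C$ just described are unbounded on unbounded cells, and a sum of $O(1)$ bounded functions could never equal an unbounded target. I would resolve this by carrying out the construction inside a large axis-parallel box $B=[-M,M]^{4k}$ chosen large enough to contain a minimizer of the $k$-means objective — such a box is dictated by the input, since no optimal center need lie far from the input segments. Replacing each cell $C$ by $C\cap B$ adds only $8k=O(1)$ linear inequalities, keeps the family nice (now with bounded support, hence bounded values), and preserves the identity $\sum_C f_C = \min_i d_H^2(s,s_i)$ on $B$, which is all that the subsequent optimization via Theorem~\ref{thm:Vigneron} requires. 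Verifying that the cells genuinely partition the box and that $h_C$ is the correct, continuous distance formula on each cell — including the degenerate-center boundaries — is where the routine but careful case analysis lives; once that is in place, the counts of cells, degrees, and defining inequalities are all $O(1)$ because $k$ is fixed.
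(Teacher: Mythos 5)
Your construction is essentially the paper's own proof: the same arrangement in $\RR^{4k}$ of the eight perpendicularity hypersurfaces per center together with bisector/argmin hypersurfaces, the same cell-indicator-times-algebraic-function encoding of the minimum, and the same $O(1)$ counting of cells, degrees, and defining inequalities since $k$ is fixed. The one place you go beyond the paper is the boundedness requirement in the definition of a nice family: the paper asserts niceness without noting that $d_H^2(s,s_i(z))$ is unbounded on unbounded cells, and your restriction to a large box containing a minimizer is a legitimate (and arguably necessary) repair of that omission.
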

\begin{proof}
	Let $s=ab$ be the fixed segment.
	For each index $i\in [k]$, we consider the set $\Sigma(i)$
	of 8 hypersurfaces in $\RR^{4k}$, 
	each of them given by one of the following conditions
	\begin{align*}
		& a_i\in \ell_\perp(a,ab),~ b_i\in \ell_\perp(a,ab),~ a_i\in \ell_\perp(b,ab),~ b_i\in \ell_\perp(b,ab),\\
		& a\in\ell_\perp(a_i,a_ib_i),~ b\in\ell_\perp(a_i,a_ib_i),~ a\in\ell_\perp(b_i,a_ib_i),~ b\in\ell_\perp(b_i,a_ib_i).
	\end{align*}
	Note that here $x(a)$, $y(a)$, $x(b)$ and $y(b)$ are input data  
	while $x(a_i)$, $y(a_i)$, $x(b_i)$ and $y(b_i)$ are variables defining
	coordinates in the parameter space $\RR^{4k}$.

	Set $\Sigma := \cup_{i\in [k]} \Sigma(i)$ and let
	$\AA_\Sigma$ be the arrangement in $\RR^{4k}$ defined by $\Sigma$. 
	From the foregoing discussion, we have the following property:
	for each cell $c$ of $\AA_\Sigma$ and each index $i\in [k]$, the set of functions $\FF(s,a_ib_i)$ stays the same
	and each of the distances $\delta(a, a_ib_i), \delta(b, a_ib_i), \delta(a_i, ab),$ and $\delta(b_i, ab)$, 
	is given by the same function from $\FF(s,a_ib_i)$. Thus, for all $z\in c$, 
	the distance $d_H(s,s_i(z))$ is described by the maximum among the same four functions from $\FF(s,a_ib_i)$. 
	
	In order to make clear that only the coordinates of $a_i$ and $b_i$ are relevant
	in the functions in $\FF(s,a_ib_i)$, we change the notation to $\GG_i$ 
	and take each function $g$ of $\GG_i$ to map from $\RR^{4k}$ to $\RR$.
	Formally, for each function $f\in \FF(s,a_ib_i)$ we put 
	into $\GG_i$ the function $g(z):=f(ab,s_i(z))$.

	We next define a set $\Lambda$ of algebraic hypersurfaces in $\RR^{4k}$ playing the role of ``bisectors''. 
	For each $i,j\in [k]$ with $i \leq j$, we define $\Lambda(i,j)$ as the hypersurfaces
	given by equating each function of $\GG_i$ to each function of
	$\GG_j$. 
	Note that each hypersurface is defined by a polynomial equality of degree at most $6$.
	Since $\GG_i$ has $8$ functions for each $i\in [k]$, 
	the set $\Lambda(i,j)$ has at most $8^2=64$ hypersurfaces (it is $32$ for $\Lambda(i,i)$).

	Set $\Lambda :=\cup_{i\in [k]} \cup_{j\in [k], i\leq j} \Lambda(i,j)$
	and let $\AA_\Lambda$ be the arrangement in $\RR^{4k}$ induced by $\Lambda$.
	For each cell $c\in \AA_\Lambda$ the sign of each function
	$g(z)-g'(z)$ remains constant for $g\in \GG_i$, $g'\in \GG_j$ and $z\in c$. 

	Finally, let $\AA$ be the arrangement in $\RR^{4k}$ induced
	by the hypersurfaces in $\Sigma\cup \Lambda$. 
	Note that this is a refinement
	of $\AA_\Sigma$ and $\AA_\Lambda$, meaning that each cell of $\AA$ is contained 
	in a cell of $\AA_\Sigma$ and a cell of $\AA_\Lambda$.

	Consider a cell $c\in \AA$. 
	Since $c$ is contained in a cell of $\AA_\Sigma$, for each $i\in [k]$, each function in the set 
	$\Delta_i(c) = \{ \delta(a, s_i(z)), \delta(b, s_i(z)), \delta(a_i(z), ab),$ and $\delta(b_i(z), ab) \}$
	is given by the same function of $\GG_i$ for all $z\in c$. Moreover, since $c$ is contained in a cell of 
	$\AA_\Lambda$, for every two distinct functions $\delta, \delta' \in \Delta_i(c)$ the sign of $\delta - \delta'$ is 	
	constant for all $z\in c$. From these two facts we conclude that, for each $i\in [k]$, there is some function 	
	$g_{c,i}(z) \in \GG_i$ such that $d_H(s,s_i(z)) = g_{c,i}(z)$ for all $z\in c$. This function can be easily determined in 
	$O(1)$ time per cell by evaluating each function in $\Delta_i(c)$ at some arbitrary point in $c$.
	
	Similarly, since $c$ is contained in a cell of $\AA_\Lambda$, we have that 
	for each distinct $i,j\in [k]$ the sign of 
	\[
		d_H(s,s_i(z)) - d_H(s,s_j(z)) ~=~ g_{c,i}(z)- g_{c,j}(z)
	\]
	is constant for all $z\in c$. 
	This implies that, for each cell $c\in \AA$, 
	there exists some index $\iota(c)\in [k]$ with the following property:
	\begin{align*}
		&\forall j\in [k],~ z\in c:~~~ d_H(s,s_{\iota(c)}(z)) \le d_H(s,s_j(z)).
	\end{align*}
	In other words, the segment $s_{\iota(c)}(z)$ is a closest one to $s$ among $s_1(z),
	\dots,s_k(z)$
	and moreover the distance $d_H(s,s_{\iota(c)}(z))$ is given by a single
	function $g_{c,\iota(c)}$ from $\GG_{\iota(c)}$.
	Thus, for each $z\in c$ it holds 
	$\min_{i\in [k]} d_H(s,s_i(z))=g_{c,\iota(c)}(z)$. As before, the function $g_{c,\iota(c)}(z)$ 
	can be determined in $O(k)$ time per cell by evaluating each $d_H(s,s_i(z))$ 
	at some arbitrary point in $c$. 
	
	For any set $A$, let $1_A$ be the function with $1_A(x)=1$ if $x\in A$ and $1_A(x)=0$ if $x\notin A$.
	For each cell $c\in \AA$, define the function $h_c: \RR^{4k} \rightarrow \RR$ by $h_c(z)=1_c (z) \cdot g^2_{c,\iota(c)}(z)$.
	Finally, set $\FF_s := \{ h_c\mid c\in \AA \}$.
	We can then express the function 
	\[
		z\in \RR^{4k} \mapsto \min_{i\in [k]} d^2_H(s,s_i(z))
	\]
	as 
	\[
		\min_{i\in [k]} d^2_H(s,s_i(z)) ~=~ \sum_{c\in \AA} 1_{c}(z)\, g^2_{c,\iota(c)}(z) 
		~=~ \sum_{c\in \AA} h_c(z) ~=~ \sum_{h\in \FF_s} h(z).
	\]
	Since $\Sigma\cup\Lambda$ has $O(k^2)=O(1)$ hypersurfaces, the arrangement 
	$\AA$ has $O(O(k^2)^{4k}) = O(1)$ cells, each of them described by $O(k^2)=O(1)$
	algebraic inequalities of constant description complexity and the family of functions $\FF_s$ has the desired 
	properties, where the constant $\lambda$ used to define the niceness is $O(k^{8k})$. Constructing $\AA$ (i.e., with algebraic descriptions for each cell) takes $O(O(k^2)^{4k+1} 6^{O((4k)^4)}) = O(1)$~\cite[Chapter~16]{AlgRealAlgGeom06}. The family $\FF_s$ can be constructed in this time as well.
\end{proof}

We can now apply Theorem~\ref{thm:Vigneron} combining all the functions $\FF_s$ for $s\in S$ and compute a set of $k$ segments whose cost approximates that of an optimal set of segments. 

\begin{theorem}
\label{thm:main_via_Vigneron}
	Let $k$ a fixed, positive integer and let $\eps\in (0,1)$.
	Let $S$ be a family of $n$ segments in the plane with positive weights.
	We can compute $k$ segments $s_{1,\eps},\dots,s_{k,\eps}$ in $\RR^2$ such that
	\[
		\cost_S(\{s_{1,\eps},\dots,s_{k,\eps}\}) ~\le~ (1+\eps) \min \Bigl\{ \cost_S(\{s_1,\dots,s_k\}) \mid s_1,\dots,s_k\text{ segments}\Bigr\}.
	\]
	in time $O(n^{8k -2+\eta} + (n/\eps)^{4k+1} \log^{4k+1} (n/\eps))$, for any $\eta > 0$. 
	The constant hidden in the $O$-notation depends on $\eta$ and on $k$.
\end{theorem}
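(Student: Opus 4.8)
The plan is to bundle everything into a single nice family of functions on $\RR^{4k}$ whose sum is exactly the cost, and then run Theorem~\ref{thm:Vigneron}. First I would apply Theorem~\ref{thm:nice_family} to each input segment $s\in S$ to obtain, in $O(1)$ time, a nice family $\FF_s$ of $O(1)$ functions with $\sum_{f\in\FF_s}f(z)=\min_{i\in[k]} d_H^2(s,s_i(z))$ for every $z\in\RR^{4k}$. To fold in the weights I would replace each $f\in\FF_s$ by $w_s\cdot f$ and set $\FF:=\bigcup_{s\in S}\{\,w_s\cdot f\mid f\in\FF_s\,\}$. Since $w_s>0$, scaling by $w_s$ leaves $\supp(f)$, its boolean structure, and the degree of its defining algebraic function unchanged, and it preserves nonnegativity, boundedness, and continuity on the support; moreover the niceness constant $\lambda$ produced by Theorem~\ref{thm:nice_family} depends only on $k$, so a single $\lambda$ serves all the $\FF_s$ at once. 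Hence $\FF$ is a nice family of cardinality $m=|S|\cdot O(1)=O(n)$ in dimension $d=4k$, and by construction
\[
	\sum_{f\in\FF} f(z) ~=~ \sum_{s\in S} w_s \sum_{f\in\FF_s} f(z) ~=~ \sum_{s\in S} w_s \min_{i\in[k]} d_H^2(s,s_i(z)) ~=~ \cost_S(\{s_1(z),\dots,s_k(z)\}).
\]

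The hard part will be verifying the hypothesis of Theorem~\ref{thm:Vigneron} that $g:=\sum_{f\in\FF}f$ attains its minimum over $\RR^{4k}$, because the domain is noncompact and the individual distance functions blow up as center segments recede to infinity. I would handle this by a coercivity argument. Fix a reference tuple $z_0$ (say all centers equal to a fixed input segment), with finite value $C_0:=g(z_0)$. For any $z$ with $g(z)\le C_0$, any center $s_i(z)$ that is closest to some input segment $s$ satisfies $w_s\,d_H^2(s,s_i(z))\le C_0$, so $s_i(z)$ lies within Hausdorff distance $\sqrt{C_0/\min_s w_s}$ of $s$ and therefore inside a bounded box $B\subset\RR^{4k}$ determined by $S$ and $C_0$. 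Any center that is closest to no input segment can be moved into $B$ without increasing $g$, since enlarging the pool of candidate centers can only lower a pointwise minimum; thus $\inf_{\RR^{4k}}g=\inf_{B}g$. As $g$ is continuous (a finite weighted sum of pointwise minima of the continuous maps $z\mapsto d_H^2(s,s_i(z))$) and $B$ is compact, the infimum over $B$ is attained, so $\min_{\RR^{4k}}g$ exists and equals the optimum of $\cost_S$ over all $k$-tuples of segments. If one insists that the supports in $\FF$ be globally bounded, I would additionally intersect each $\supp(\cdot)$ with $B$, which adds only $O(1)$ polynomial inequalities per function and hence changes neither niceness nor the asymptotic running time.

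With niceness and the existence of the minimum in hand, Theorem~\ref{thm:Vigneron} produces a point $z'_\eps\in\RR^{4k}$ with $g(z'_\eps)\le(1+\eps)\min_z g(z)$ in time $O\bigl(m^{2d-2+\eta}+(m/\eps)^{d+1}\log^{d+1}(m/\eps)\bigr)$ for any $\eta>0$. Setting $s_{i,\eps}:=s_i(z'_\eps)$ then gives $\cost_S(\{s_{1,\eps},\dots,s_{k,\eps}\})=g(z'_\eps)\le(1+\eps)\min_z g(z)$, which is the claimed $(1+\eps)$-approximation. Finally I would substitute $m=O(n)$ and $d=4k$: the first term becomes $O(n^{8k-2+\eta})$ and the second $O((n/\eps)^{4k+1}\log^{4k+1}(n/\eps))$, with the constant absorbing the per-segment $O(1)$ factor and depending on $\eta$ and $k$. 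The $O(n)$ cost of building all the families $\FF_s$ is dominated by this, yielding the stated bound. The only genuinely delicate point is the coercivity step above; the remainder is bookkeeping in combining and reweighting the families so that the single constant $\lambda$ and the dimension $d=4k$ feed correctly into Theorem~\ref{thm:Vigneron}.
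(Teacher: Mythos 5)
Your proposal is correct and follows essentially the same route as the paper's proof: build the nice families $\FF_s$ via Theorem~\ref{thm:nice_family}, fold in the weights by scaling each $f\in\FF_s$ by $w_s$, take the union to get $m=O(n)$ nice functions on $\RR^{4k}$ summing to $\cost_S(\{s_1(z),\dots,s_k(z)\})$, and apply Theorem~\ref{thm:Vigneron} with $d=4k$. The one addition is your coercivity argument establishing that $\min_{z}g(z)$ is attained (and the related remark on bounding the supports); the paper's proof silently assumes this hypothesis of Theorem~\ref{thm:Vigneron}, so this is a useful extra check rather than a different approach.
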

\begin{proof}
	For each segment $s\in S$ we compute the family $\FF_s$ of Theorem~\ref{thm:nice_family}. 
	This takes $O(n) \cdot O(1) = O(n)$ time in total. 	
	To account for the weight $w_s>0$ of the segment $s$, 
	we replace in $\FF_s$ each function $f\in \FF_s$ with $w_s\cdot f$.
	Define $\FF:=\cup_{s\in S} \FF_s$ and the function $g:=\sum_{f\in \FF} f$. 
	Note that $\FF$ is a family of 
	$O(1) \cdot O(n) = O(n)$ nice functions and
	\[
		\forall z\in \RR^{4k}: ~~~ g(z) ~=~ \sum_{s\in S}\sum_{f\in \FF_s} f(z)
													 ~=~ \sum_{s\in S} w_s\cdot \min_{i\in [k]} d_H(s,s_i(z))^2
													~=~ \cost_S(\{ s_1(z),\dots, s_k(z)\}).
	\]
	We can then use Theorem~\ref{thm:Vigneron} to find 
	in time $O(|\FF|^{2\cdot 4k -2+\eta} + (|\FF|/\eps)^{4k+1} \log^{4k+1} (|\FF|/\eps))$, for any $\eta>0$,
	a point $z'_\eps\in \RR^{4k}$ such that 
	\[
		g(z'_\eps) ~\le~ (1+\eps) \min_{z\in \RR^{4k}} \cost_S(\{ s_1(z),\dots, s_k(z)\}).
	\]
	The point $z'_\eps\in \RR^{4k}$ defines the segments 
	$s_{1,\eps}:=s_1(z'_\eps),\dots,s_{k,\eps}:=s_k(z'_\eps)$.	
	Since $s_1(z),\dots,s_k(z)$ goes over all $k$ tuples of segments when $z$ iterates over all $\RR^{4k}$,
	we have 
	\[
		\min_{z\in \RR^{4k}} \cost_S(\{ s_1(z),\dots, s_k(z)\}) ~=~ \min_{s_1,\dots,s_k} \cost_S(\{ s_1,\dots, s_k\}).
	\]	
	We conclude that 
	\[
		\cost_S(\{s_{1,\eps},\dots,s_{k,\eps}\} ~=~ g(z'_\eps) 
			~\le~ (1+\eps) \min_{s_1,\dots,s_k} \cost_S(\{ s_1,\dots, s_k\}).	\qedhere
	\]
\end{proof}


\section{A coreset for $k$-means in $(\S, d_H)$}
\label{sec:coreset}

We use the sensitivity framework of Feldman and Langberg~\cite{Feldman11, Feldman2020}.
Let $F$ be a finite set of functions, each of them mapping from $\RR^d$ to $\RR_{\ge 0}$.
The \emph{sensitivity} of $f\in F$ with respect to $F$ is 
\[
	\sigma(f,F) ~:=~ \sup_{z\in \RR^d} \frac{f(z)}{\displaystyle \sum_{g\in F} g(z)}.
\]
We also consider the following range space
\[
	\range(F) ~:=~ \left( F, \bigl\{ \{ f \in F \mid f(z)\ge r \} \mid z\in \RR^d,~ r \in [0,\infty) \bigr\} \right).
\]
We will use the following theorem from~\cite{Feldman2020},
which we state here adapted to our needs.

\begin{theorem}[Adaptation of Theorem 31 in Feldman et al.~\cite{Feldman2020}]
\label{CoresetThm}
	Let $F$ be a set of $n$ functions from $\RR^d$ to $[0,\infty)$ with the following properties:
	\begin{itemize}
	\item For each choice of  weights $w_f > 0$ for $f\in F$, the range space 
	$\range(\{w_f\cdot f \mid f\in F \})$ has bounded VC-dimension.
	\item For each $f\in F$ we are given a value $\tilde\sigma(f)$ such that
	\[ \tilde\sigma(f) \ge \frac{1}{|F|} ~~~\text{ and }~~~ \tilde\sigma(f) \ge \sigma(f,F).
	\]
	\end{itemize}
	Set $\tilde\Sigma(F):=\sum_{f\in F} \tilde\sigma(f)$.
	Let $\delta,\eps$ be real values in $(0,1/2)$.
	In time $O(|F|)$ we can compute a subset $C\subseteq F$ of 
	\[
		O\left( \frac{\tilde\Sigma(F)}{\eps^2}\left( \log \tilde\Sigma(F) + \log \frac{1}{\delta}\right) \right)
	\]
	weighted functions and weights $u_f> 0$ for each $f\in C$ such that, 
	with probability at least $1-\delta$:
	\[
		\forall z\in\RR^d:~~~ \left| \sum_{f\in F} f(z) - \sum_{f\in C}u_f\cdot f(z)\right| ~\le~
		\eps \sum_{f\in F} f(z).
	\]
\end{theorem}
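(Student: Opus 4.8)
The plan is to build $C$ by \emph{importance sampling} with respect to the given sensitivity upper bounds, in the spirit of the Feldman--Langberg framework, and then to promote the (easy) pointwise concentration into a guarantee that holds \emph{uniformly} over all $z\in\RR^d$ by exploiting the bounded VC-dimension hypothesis. Concretely, I would set the sampling probabilities $p_f := \tilde\sigma(f)/\tilde\Sigma(F)$; these are well defined and positive since the hypothesis $\tilde\sigma(f)\ge 1/|F|$ forces $\tilde\Sigma(F)\ge |F|\cdot\tfrac1{|F|}=1$. I then draw $m$ functions $f_1,\dots,f_m$ independently from $F$ according to $(p_f)_{f\in F}$, take $C$ to be the support of this multiset, and assign to each $f\in C$ the weight $u_f := c_f/(m\,p_f)$, where $c_f$ is the number of indices $j$ with $f_j=f$. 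The target sample size is $m=O\bigl(\tilde\Sigma(F)\,\eps^{-2}(\log\tilde\Sigma(F)+\log\tfrac1\delta)\bigr)$, matching the claimed bound; note that $\tilde\Sigma(F)\ge 1$ makes $\log\tilde\Sigma(F)\ge 0$, so the formula is meaningful.

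Having fixed the sampler, I would first record that the estimator is unbiased: for each fixed $z$ one has $\mathbb{E}\bigl[\sum_{f\in C}u_f f(z)\bigr]=\sum_{f\in F}f(z)$, since a single draw contributes $f_j(z)/(m\,p_{f_j})$ and $\mathbb{E}[f_j(z)/(m\,p_{f_j})]=\tfrac1m\sum_{f\in F}f(z)$. The decisive structural fact, and the reason sensitivities are exactly the right sampling weights, is a \emph{per-draw boundedness} estimate: for every draw and every $z$,
\[
   \frac{1}{m}\cdot\frac{f_j(z)}{p_{f_j}}
   ~=~ \frac{\tilde\Sigma(F)}{m}\cdot\frac{f_j(z)}{\tilde\sigma(f_j)}
   ~\le~ \frac{\tilde\Sigma(F)}{m}\cdot\frac{\sigma(f_j,F)}{\tilde\sigma(f_j)}\sum_{g\in F}g(z)
   ~\le~ \frac{\tilde\Sigma(F)}{m}\sum_{g\in F}g(z),
\]
where the first inequality uses $f_j(z)\le \sigma(f_j,F)\sum_{g}g(z)$ from the definition of sensitivity, and the last uses the hypothesis $\tilde\sigma(f)\ge\sigma(f,F)$. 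Thus every sampled term contributes at most a $\tilde\Sigma(F)/m$ fraction of the total $\sum_g g(z)$, which is precisely the bounded ``range'' that lets a Bernstein/Hoeffding argument give, for each fixed $z$, a relative deviation of at most $\eps$ with the stated $m$.

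The substantive work is to upgrade this \emph{pointwise} bound to one holding \emph{simultaneously for all} $z\in\RR^d$, and this is where the VC hypothesis enters. I would use the layer-cake identity $f(z)=\int_0^\infty \mathbf{1}[f(z)\ge r]\,dr$ to rewrite both $\sum_{f\in F}f(z)$ and $\sum_{f\in C}u_f f(z)$ as integrals over $r$ of the (weighted) measures of the ranges $\{f\in F: f(z)\ge r\}$ that define $\range(F)$. A uniform approximation of these measures across all $z$ and all thresholds $r$ is exactly a \emph{relative $\eps$-approximation} of the weighted range space $\range(\{w_f\cdot f\})$; by the bounded-VC-dimension hypothesis (assumed for every weighting $(w_f)$) together with the relative-approximation sampling bounds (Li--Long--Srinivasan, Har-Peled--Sharir), a sample of the size above is such an approximation with probability at least $1-\delta$. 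Integrating this guarantee over $r$ then yields the claimed multiplicative bound uniformly in $z$. Finally, computing $\tilde\Sigma(F)$ from the given values and drawing the sample takes $O(|F|)$ time (construct the distribution once, then sample), giving the stated running time.

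I expect the main obstacle to be precisely this last reduction: making the passage from the pointwise Bernstein bound to the uniform-over-$z$ statement fully rigorous, that is, correctly encoding the \emph{relative} (multiplicative) error into the range-space approximation and verifying that the sampling bound scales with the total sensitivity $\tilde\Sigma(F)$ rather than with $|F|$. The two hypotheses --- $\tilde\sigma(f)\ge 1/|F|$ (so that $\tilde\Sigma(F)\ge 1$ and the $\log\tilde\Sigma(F)$ term is nonnegative) and bounded VC-dimension for \emph{every} choice of weights (so that the relative approximation applies to the reweighted sample) --- are exactly what make the reduction go through, and I would take care to invoke each at the appropriate point.
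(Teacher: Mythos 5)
This theorem is not proved in the paper at all: it is imported verbatim (as an ``adaptation'') from Theorem~31 of Feldman, Schmidt, and Sohler~\cite{Feldman2020}, so there is no in-paper proof to compare against. Your sketch is a faithful reconstruction of the standard sensitivity-sampling argument behind that result --- importance sampling with $p_f=\tilde\sigma(f)/\tilde\Sigma(F)$, unbiasedness, the per-draw bound $f_j(z)/(m\,p_{f_j})\le (\tilde\Sigma(F)/m)\sum_g g(z)$, and a VC-based uniform-convergence step --- and you correctly identify both where each hypothesis is used and where the real difficulty lies. The one step you gloss over is genuinely the crux: the relative-approximation theorems you cite are stated for \emph{uniform} sampling from a range space, whereas your sample is drawn non-uniformly, so one must first reweight (replace each $f$ by $f/(\tilde\sigma(f))$, or duplicate functions in proportion to $\tilde\sigma$) to convert sensitivity sampling into uniform sampling from a modified family whose functions are uniformly bounded relative to $\sum_g g$; this is exactly why the theorem demands bounded VC-dimension for \emph{every} choice of weights $w_f$, not just for $F$ itself. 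With that step filled in as in~\cite{Feldman2020}, your outline is sound.
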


For each input segment $s\in S$, we define the function $f_s:\RR^{4k}\rightarrow \RR_{\ge 0}$ with
\[
	f_s(z)~:=~ \min \{ d^2_H(s,s_1(z)),\dots, d^2_H(s,s_k(z))\} ~=~ \bigl( \min \{ d_H(s,s_1(z)),\dots, d_H(s,s_k(z))\}\bigr)^2.
\]
Here, the segments $s_1(z),\dots,s_k(z)$ are the same that were used in the parameterization before Theorem~\ref{thm:nice_family}.
Set $F=\{ f_s \mid s\in S\}$.
In order to use the above theorem, we need appropriate sensitivity upper bounds $\tilde\sigma(f_s)$ for each $f_s \in F$ and a bound on the total sensitivity $\tilde\Sigma(F)$. 
Let $\opt_k(S)$ be the cost of an optimal set of segments for $k$-means, i.e., $\opt_k(S) = \min_{s_1,\dots,s_k} \cost_S(\{ s_1,\dots, s_k\})$.

\begin{lemma}
\label{le:sensitivity}
	Let $s'_1\dots,s'_{k'}$ be a bicriteria $(\alpha,\beta)$-approximation for $k$-means, that is, 
	$k'\le \beta k$ and $\cost_S(\{s'_1,\dots,s'_{k'}\})\le \alpha\cdot \opt_k(S)$, where $\alpha,\beta\ge 1$.
	For each $i\in [k']$, let $S'_i$ be the segments of $S$ closer to $s'_i$ 
	than to any other segment $s'_j$, $j\in [k']\setminus \{i\}$; 
	ties are solved arbitrarily so that $S'_1,\dots, S'_{k'}$ is a partition of $S$.
	For each segment $s\in S$, let $\iota(s)\in [k']$ be such that $s\in S'_{\iota(s)}$.
	Define for each $s\in S$ the value
	\[
		\tilde\sigma(f_s)~:=~ \frac{32\alpha}{|S'_{\iota(s)}|} + \frac{16\alpha\cdot d^2_H(s,s'_{\iota(s)})}{\displaystyle\sum_{s'\in S'_{\iota(s)}} d^2_H(s',s'_{\iota(s)})} 
			~=~ \frac{32\alpha}{|S'_{\iota(s)}|} + \frac{16\alpha\cdot d^2_H(s,s'_{\iota(s)})}{\cost_{S'_{\iota(s)}}(s'_{\iota(s)})}. 
	\]
	Then $\tilde\sigma(f_s)\ge \sigma(f_s,F)$ and $\tilde\sigma(f_s)\ge \frac{1}{|F|}$.
\end{lemma}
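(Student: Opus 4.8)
The plan is to bound the sensitivity $\sigma(f_s,F)=\sup_{z\in\RR^{4k}} f_s(z)/\sum_{s'\in S}f_{s'}(z)$ for a fixed but arbitrary $z$ and then take the supremum. Throughout I write $C=\{s_1(z),\dots,s_k(z)\}$ for the candidate centers encoded by $z$, $D(z):=\sum_{s'\in S}f_{s'}(z)=\cost_S(C)$ for the denominator, and $b:=s'_{\iota(s)}$ for the bicriteria center of the cluster containing $s$. The only structural fact I need about the ground space is that $(\S,d_H)$ is a metric space, so squared distances satisfy the relaxed triangle inequality $d_H^2(x,y)\le 2\,d_H^2(x,u)+2\,d_H^2(u,y)$, which follows from $d_H(x,y)\le d_H(x,u)+d_H(u,y)$ together with $(p+q)^2\le 2p^2+2q^2$. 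Because this is all I use, the argument is in fact generic for any metric space.

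First I would decompose the numerator. Let $c_b\in C$ realize $d_H(b,C)=\min_i d_H(b,s_i(z))$; then $f_s(z)=\min_i d_H^2(s,s_i(z))\le d_H^2(s,c_b)\le 2\,d_H^2(s,b)+2\,d_H^2(b,C)$. Dividing by $D(z)$ splits the ratio into $\textrm{Term A}=2\,d_H^2(s,b)/D(z)$ and $\textrm{Term B}=2\,d_H^2(b,C)/D(z)$, which I bound separately using two different lower bounds on $D(z)$. For Term A I use a global bound: since $C$ is a feasible choice of $k$ centers, $D(z)\ge \opt_k(S)$, and the bicriteria guarantee gives $\sum_{j}\cost_{S'_j}(s'_j)=\cost_S(\{s'_1,\dots,s'_{k'}\})\le\alpha\,\opt_k(S)$, whence $D(z)\ge \alpha^{-1}\cost_{S'_{\iota(s)}}(b)$. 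This yields $\textrm{Term A}\le 2\alpha\,d_H^2(s,b)/\cost_{S'_{\iota(s)}}(b)$, which matches, with room to spare, the second summand of $\tilde\sigma(f_s)$.

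For Term B I would use a per-cluster bound obtained by applying the relaxed triangle inequality in the reverse direction to each $s'\in S'_{\iota(s)}$, namely $d_H^2(s',C)\ge \tfrac12\,d_H^2(b,C)-d_H^2(s',b)$. Summing over the cluster gives
\[
	D(z)\ \ge\ \sum_{s'\in S'_{\iota(s)}} d_H^2(s',C)\ \ge\ \tfrac{|S'_{\iota(s)}|}{2}\,d_H^2(b,C)\ -\ \cost_{S'_{\iota(s)}}(b).
\]
Rearranging bounds $d_H^2(b,C)\le \tfrac{2}{|S'_{\iota(s)}|}\bigl(D(z)+\cost_{S'_{\iota(s)}}(b)\bigr)$, and since $\cost_{S'_{\iota(s)}}(b)\le \cost_S(\{s'_1,\dots,s'_{k'}\})\le\alpha\,\opt_k(S)\le \alpha D(z)$, this gives $\textrm{Term B}\le \tfrac{4(1+\alpha)}{|S'_{\iota(s)}|}\le \tfrac{8\alpha}{|S'_{\iota(s)}|}$, using $\alpha\ge1$, which matches the first summand of $\tilde\sigma(f_s)$. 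Adding the two terms and taking $\sup_z$ yields $\sigma(f_s,F)\le \tfrac{2\alpha\,d_H^2(s,b)}{\cost_{S'_{\iota(s)}}(b)}+\tfrac{8\alpha}{|S'_{\iota(s)}|}\le \tilde\sigma(f_s)$; the generous constants $16$ and $32$ in the statement leave ample slack.

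The main obstacle is Term B: the reverse triangle inequality introduces the \emph{negative} quantity $-\cost_{S'_{\iota(s)}}(b)$, so a naive per-cluster lower bound on $D(z)$ can be vacuous. The decisive point is that this term is dominated by $\alpha D(z)$ through the bicriteria guarantee, which is exactly why one compares the \emph{total} bicriteria cost (not merely the single cluster cost) against $\opt_k(S)\le D(z)$. The remaining claim $\tilde\sigma(f_s)\ge 1/|F|$ is then immediate: discarding the nonnegative second summand gives $\tilde\sigma(f_s)\ge 32\alpha/|S'_{\iota(s)}|\ge 32/|S|=32/|F|\ge 1/|F|$, since $\alpha\ge1$ and $|S'_{\iota(s)}|\le|S|=|F|$. (If $\cost_{S'_{\iota(s)}}(b)=0$, then $s=b$ and the whole ratio is captured by Term B, so one may assume this cost is positive whenever the second summand is invoked.)
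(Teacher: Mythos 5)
Your proof is correct, but it takes a genuinely different route from the paper's. The paper proceeds by a three-way case analysis on the position, relative to $s$, of the candidate center nearest to $s$: if some $s_j(z)$ satisfies $d^2_H(s,s_j(z))\le 16\,d^2_H(s,s'_i)$ or $d^2_H(s,s_j(z))\le 16 R_i$ (where $R_i$ is the average cluster cost), the numerator is bounded directly against $\opt_k(S)$; otherwise all candidate centers are far from $s$, and a Markov-type averaging argument shows that at least half the segments of $S'_i$ (those with $d^2_H(s',s'_i)\le 2R_i$) are also far from every candidate center, which makes the denominator at least $\tfrac{|S'_i|}{32}\,f_s(z)$. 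You instead avoid cases entirely: you split the numerator with the doubled triangle inequality $f_s(z)\le 2d^2_H(s,b)+2d^2_H(b,C)$ and then lower-bound the denominator twice --- globally via $D(z)\ge\opt_k(S)\ge\alpha^{-1}\cost_{S'_{\iota(s)}}(b)$ for the first term, and per-cluster via the reverse inequality $d^2_H(s',C)\ge\tfrac12 d^2_H(b,C)-d^2_H(s',b)$ for the second, correctly observing that the negative correction is absorbed by $\alpha D(z)$. Every step checks out (including the rearrangement being valid even when the intermediate lower bound on $D(z)$ is vacuous, and the degenerate case $\cost_{S'_{\iota(s)}}(b)=0$), and your constants $2\alpha$ and $8\alpha$ sit comfortably under the $16\alpha$ and $32\alpha$ of the statement. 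What each approach buys: yours is shorter, case-free, and yields slightly better constants; the paper's case analysis works with the plain (unsquared) triangle inequality and makes explicit the geometric picture of which cluster members "witness" a large denominator. Both use only the metric axioms, so neither is more general in substance.
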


\begin{proof}
	Consider a fixed $s\in S$ and set $i=\iota(s)$; thus $s\in S'_i$.
	Define $R_i$ to be the average cost of the elements of $S'_i$ for the center $s'_i$:
	\[
		R_i ~=~ \frac{\cost_{S'_i}(s'_i)}{|S'_i|} ~=~ \frac{\displaystyle\sum_{s'\in S'_i} d^2_H(s',s'_i)}{|S'_i|}.
	\]
	With this notation we have 
	\[
		\tilde\sigma(f_s) ~=~ \frac{32\alpha}{|S'_i|} + \frac{16\alpha\cdot d^2_H(s,s'_i)}{\cost_{S'_i}(s'_i)} 
				~=~ \frac{32\alpha}{|S'_i|} + \frac{16\alpha\cdot d^2_H(s,s'_i)}{|S'_i| \cdot R_i}. 
	\]
	Since $|S'_i|\le |S| =|F|$ and $\alpha\ge 1$, 
	we have $\tilde\sigma(f_s) \ge \frac{1}{|S'_i|} \ge \frac{1}{|F|}$.
	
	For the other inequality, recall that 
	\[
		\sigma(f_s,F) ~=~ \sup_{z\in \RR^{4k}} \frac{f_s(z)}{\displaystyle \sum_{f_{s'}\in F} f_{s'}(z)} 
			~=~ \sup_{z\in \RR^{4k}} \frac{\min\{ d^2_H (s,s_1(z)),\dots, d^2_H (s,s_k(z))\}}{\displaystyle \sum_{s'\in S} \min\{ d^2_H (s',s_1(z)),\dots, d^2_H (s',s_k(z))\} }.
	\]
	
	Consider any $z\in \RR^{4k}$. We distinguish three cases.
	
	\medskip
	\noindent
	First, if $z$ is such that for some $j\in [k]$ we have $d^2_H(s,s_j(z)) \le 16\cdot d^2_H(s,s'_i)$,
	then we use that $\cost_S(\{ s'_1,\dots,s'_{k'} \}) \le \alpha\cdot \opt_k(S)$ to obtain
	\[
		\frac{f_s(z)}{\displaystyle \sum_{s'\in S} f_{s'}(z)} ~\le~ 
			\frac{d^2_H(s,s_j(z))}{\opt_k(S)} ~\le~
			\frac{16\alpha\cdot d^2_H(s,s'_i)}{\cost_S(\{ s'_1,\dots,s'_{k'} \})} ~\le~
			\frac{16\alpha\cdot d^2_H(s,s'_i)}{\cost_{S'_i}(s'_i)} ~\le~
			\tilde\sigma(f_s).
	\]
	Second, if $z$ is such that for some $j\in [k]$ we have $d^2_H(s,s_j(z)) \le 16\cdot R_i$,
	then we use again that $\cost_S(\{ s'_1,\dots,s'_{k'} \}) \le \alpha\cdot \opt_k(S)$
	to obtain
	\begin{align*}
		\frac{f_s(z)}{\displaystyle \sum_{s'\in S} f_{s'}(z)} ~&\le~ 
			\frac{d^2_H(s,s_j(z))}{\opt_k(S)} ~\le~
			\frac{16\alpha\cdot R_i}{\cost_S(\{ s'_1,\dots,s'_{k'} \})} ~\le~
			\frac{16\alpha\cdot R_i}{\cost_{S'_i}(s'_i)} ~=~
			\frac{16\alpha\cdot R_i}{|S'_i|\cdot R_i} ~=~
			\frac{16\alpha}{|S'_i|}\\~&\le~
			\tilde\sigma(f_s).
	\end{align*}
	It remains to handle the case when, for the $z\in \RR^{4k}$ under consideration,
	we have $d^2_H(s,s_j(z)) > 16\cdot d^2_H(s,s'_i)$ and 
	$d^2_H(s,s_j(z)) > 16\cdot R_i$ for all $j\in [k]$.
	Let $s''_j$ be the segment defined by $z$ that is closest to $s$, 
	which means that $f_s(z)=d^2_H(s,s''_j)$.
	We then have $d_H(s,s''_j) > 4\cdot d_H(s,s'_i)$ and $d_H(s,s''_j) > 4\cdot \sqrt{R_i}$.
	Because of the triangular inequality for $d_H(\cdot,\cdot)$ and because $s''_j$
	is closest to $s$, for each segment $s'\in S$
	and each $\ell\in [k]$ we have
	\[
		d_H(s,s''_j) ~\le~ d_H(s,s_\ell(z)) ~\le~ 
			d_H(s,s'_i)+ d_H(s'_i,s') + d_H(s',s_\ell(z)).
	\]
	This means that,
	\[
		\forall s'\in S:~~~ \min\{ d_H(s',s_1(z)),\dots, d_H(s',s_k(z))\}  
			~\ge~  d_H(s,s''_j) - d_H(s,s'_i)- d_H(s'_i,s').
	\]
	Using that $d_H(s,s'_i) < d_H(s,s''_j) /4$ we get:
	\[
		\forall s'\in S:~~~ \min\{ d_H(s',s_1(z)),\dots, d_H(s',s_k(z))\} 
			~\ge~  \tfrac 34 d_H(s,s''_j) - d_H(s'_i,s'). 
	\]
	At least half of the segments $s'\in S'_i$ must have $d^2_H(s',s'_i)\le 2R_i$;
	otherwise the average cost of the elements of $S'_i$ would be larger than $R_i$.
	Let $S''_i$ be those segments $s'$ of $S'_i$ with $d^2_H(s',s'_i)\le 2R_i$.
	We then have
	\begin{align*}
		\forall s'\in S''_i:~~~ \min\{ d_H(s',s_1(z)),\dots, d_H(s',s_k(z))\} ~&\ge~  
					\tfrac 34 d_H(s,s''_j) - d_H(s'_i,s')\\ ~&\ge~
					\tfrac 34 d_H(s,s''_j) - \sqrt{2 R_i}\\ ~&\ge~
					\tfrac 34 d_H(s,s''_j) - \tfrac{1}{2\sqrt{2}} d_H(s,s''_j) \\ ~&\geq~
					\tfrac 14 d_H(s,s''_j).
	\end{align*}
	and therefore
	\begin{align*}
		\forall s'\in S''_i:~~~ f_{s'}(z) = \bigl(\min\{ d_H(s',s_1(z)),\dots, d_H(s',s_k(z))\}\bigr)^2 ~&\ge~  \tfrac{1}{16} d^2_H(s,s''_j).
	\end{align*}
	It follows that
	\[
		\frac{f_s(z)}{\displaystyle \sum_{s'\in S} f_{s'}(z)} ~\le~ 
			\frac{d^2_H(s,s''_j)}{\displaystyle \sum_{s'\in S''_i} f_{s'}(z)} ~\le~
			\frac{d^2_H(s,s''_j)}{ |S''_i| \cdot \tfrac{1}{16} d^2_H(s,s''_j)} ~\leq~
			\frac{2\cdot 16}{|S'_i|} ~\le~ \tilde\sigma(f_s) .
	\]

	We have shown that
	\[
		\forall z\in \RR^{4k}:~~~ \frac{f_s(z)}{\displaystyle \sum_{s'\in S} f_{s'}(z)} ~\le~ 
			\tilde\sigma(f_s).
	\]
	It then follows that
	\[
		\sigma(f_s,F) ~=~ \sup_{z\in \RR^{4k}} \frac{f_s(z)}{\displaystyle \sum_{f_{s'}\in F} f_{s'}(z)} 
			~\le~ \tilde\sigma(f_s).\qedhere
	\]	
\end{proof}

Finally, note that for the sensitivities $\tilde\sigma(f_s)$ defined in Lemma~\ref{le:sensitivity}, we have
the total sensitivity
\begin{align*}
	\tilde\Sigma(F) ~~&=~~ \sum_{s\in S} \tilde\sigma(f_s) 
					~~=~~ \sum_{s\in S} \left(\frac{32\alpha}{|S'_{\iota(s)}|} + \frac{16\alpha\cdot d^2_H(s,s'_{\iota(s)})}{\cost_{S'_{\iota(s)}}(s'_{\iota(s)})}\right)\\
					~~&=~~  \sum_{i\in [k']}\left( \sum_{s\in S'_i} \frac{32\alpha}{|S'_i|} + \sum_{s\in S'_i} \frac{16\alpha\cdot d^2_H(s,s'_i)}{\cost_{S'_i}(s'_i)}\right)
					~~=~~  \sum_{i\in [k']}\left( 32\alpha + 16\alpha \right)\\
					~~&=~~ O(\beta \alpha k).
\end{align*}

Next, we bound the VC-dimension of the range space associated with the input segments.

\begin{lemma}
\label{le:VC}
	Assume that we have a weight $w_s > 0$ for each $s\in S$ and consider
	the set of functions $F_w=\{ w_s\cdot f_s\mid s\in S \}$.
	The range space $\range(F_w)$ has VC-dimension $O(1)$.
\end{lemma}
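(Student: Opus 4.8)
The plan is to show that membership of a ground element $w_s\cdot f_s$ in a range $\{\,w_{s'}\cdot f_{s'}\in F_w \mid w_{s'}\cdot f_{s'}(z)\ge r\,\}$ is decided by a fixed Boolean combination of a constant number of polynomial inequalities of constant degree, where the four endpoint coordinates of $s$ together with its weight $w_s$ play the role of ground-set parameters and $(z,r)\in\RR^{4k+1}$ parameterizes the range. Once this ``normal form'' is available, the VC-dimension bound follows from the standard estimate on the number of sign patterns of a family of real polynomials.

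First I would linearize the membership condition. Fix a segment $s=ab$ with weight $w_s>0$. Since $w_s>0$ and $f_s(z)=\min_{i\in[k]} d_H^2(s,s_i(z))$, we have
\[
  w_s\cdot f_s(z)\ge r \iff \bigwedge_{i\in[k]} \bigl(w_s\cdot d_H^2(s,s_i(z))\ge r\bigr).
\]
By Equation~\eqref{eq:Hausdorff_segm}, each $d_H^2(s,s_i(z))$ is the maximum of the four squared directed distances $\delta^2(a,s_i(z))$, $\delta^2(b,s_i(z))$, $\delta^2(a_i(z),s)$, $\delta^2(b_i(z),s)$, so $w_s\cdot d_H^2(s,s_i(z))\ge r$ holds iff at least one of these four terms $t$ satisfies $w_s\cdot t\ge r$. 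Finally, as recalled before Theorem~\ref{thm:nice_family}, each such squared directed distance equals, according to the side of the slab boundaries $\ell_\perp(\cdot,\cdot)$ on which the relevant point lies, either a squared endpoint distance (a polynomial of degree $2$ in the joint variables) or a squared point-to-line distance (a ratio of polynomials of degree at most $4$ and $2$). The side conditions are the signs of the polynomials $\psi$ from that discussion, which are polynomial in the joint variables. Clearing the positive denominator $|a_ib_i|^2$ (respectively $|ab|^2$) turns each comparison $w_s\cdot t\ge r$ into a single polynomial inequality of constant degree in the joint variables $(x(a),y(a),x(b),y(b),w_s,z,r)\in\RR^{4k+6}$. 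Collecting everything, membership is a fixed Boolean combination of $O(k)=O(1)$ polynomial predicates, each of degree $O(1)$, in these joint variables.

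With this normal form in hand, the VC-dimension bound is routine. Consider any $m$ candidate ground elements, i.e., segments $s_1,\dots,s_m$ with their weights. Substituting each fixed tuple $(x(a),y(a),x(b),y(b),w_s)$ into the $O(1)$ predicate polynomials yields $O(m)$ polynomials in the range variables $(z,r)\in\RR^{4k+1}$, each of degree $O(1)$. Any two range parameters inducing the same sign vector on these polynomials select, through the fixed Boolean formula, exactly the same subset of $\{s_1,\dots,s_m\}$. Hence the number of distinct ranges cut out of these $m$ elements is at most the number of sign patterns realized by $O(m)$ polynomials of constant degree in $4k+1$ variables, which is $O(m^{4k+1})$ for fixed $k$ (see, e.g., \cite[Chapter~16]{AlgRealAlgGeom06}). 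Since $2^m>O(m^{4k+1})$ for all $m$ larger than some constant depending only on $k$, no set of that size can be shattered, so $\range(F_w)$ has VC-dimension $O(1)$; the bound is independent of the particular weights, which enter only as fixed ground-set parameters.

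The main obstacle is the first step: verifying that the piecewise, rational structure of $d_H^2$ can be captured \emph{uniformly} across all ground elements by a single fixed family of polynomials in the joint parameter space, with constant degree and constant cardinality. The subtlety lies in the point-to-line distances, whose denominators must be cleared without reversing inequalities; this is handled by treating the sign of each denominator $|a_ib_i|^2$ as one of the recorded predicates (it is nonnegative, and vanishes exactly when $s_i(z)$ degenerates to a point, a case already subsumed by the endpoint-distance formulas). Everything else---the conjunction over the $k$ centers, the disjunction realizing the Hausdorff maximum, and the final counting of sign patterns---is a direct consequence of $w_s>0$ and of standard real-algebraic estimates.
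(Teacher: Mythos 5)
Your proposal is correct and follows essentially the same route as the paper: both reduce membership in a range to a constant-size Boolean combination of constant-degree algebraic conditions over the parameter space $\RR^{4k+1}$ of $(z,r)$, and then invoke the standard real-algebraic bound (polynomially many cells/sign patterns) to conclude a polynomial shatter function and hence constant VC-dimension. The only cosmetic differences are that the paper phrases the counting via the arrangement of the graphs of the functions $z\mapsto \sqrt{w_s}\,d_H(s,s_i(z))$ rather than via sign patterns of denominator-cleared polynomials, which is exactly the Matou\v sek-style methodology the paper itself cites.
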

\begin{proof}
    First note that the range space $\range(F_w)$ is equivalent to the range space $(S, R)$, where
    the ranges are
    \begin{align*}
		R~&=~\bigl\{ \{ s\in S \mid (w_s\cdot f_s)(z)\ge r \} \mid z\in \RR^{4k},~ r \in [0,\infty) \bigr\}\\
		&=~ \bigl\{ \{ s\in S \mid w_s\cdot \min\{ d^2_H(s,s_1(z))\dots, d^2_H(s,s_k(z))\} \ge r \} \mid z\in \RR^{4k},~ r \in [0,\infty) \bigr\}\\
			~&=~\bigl\{ \{ s\in S \mid \forall i\in [k]:~\sqrt{w_s} \cdot d_H(s,s_i(z)) \ge \sqrt{r} \} \mid z\in \RR^{4k},~ r \in [0,\infty) \bigr\}.
	\end{align*}
	Setting $w'_s=\sqrt{w_s}$ for each $s\in S$ and $r'=\sqrt{r}$, we get that
	the ranges are
	\[
		R ~=~\bigl\{ \{ s\in S \mid \forall i\in [k]:~ w'_s \cdot d_H(s,s_i(z)) \ge r' \} \mid z\in \RR^{4k},~ r' \in [0,\infty) \bigr\}.
	\]
	For each segment $s\in S$, consider the hypersurface $\lambda_s$ in $\RR^{4k+1}$
	given by the graph of the function 
	$z\in \RR^{4k}\mapsto w'_s \cdot d_H(s,s_i(z))$. This is 
	$
		\lambda_s ~\equiv~ \left\{ \bigl( z, w'_s \cdot d_H(s,s_i(z)) \bigr)\in \RR^{4k}\times \RR\mid z \in \RR^{4k} \right\}$.	
	As it has been discussed and used in Section~\ref{sec:algebraic} when defining the set $\FF(ab,a'b')$,
	the hypersurface $\lambda_s$ is contained in the union of $8$ \emph{algebraic} hypersurfaces of bounded degree,
	each of them being the graph of a function.
	Let $\Lambda_s$ be the set of those $8$ algebraic hypersurfaces for the segment $s\in S$.

	Set $\Lambda:= \cup_{s\in S} \Lambda_s$ and let $\AA$ be the arrangement in $\RR^{4k+1}$ induced
	by $\Lambda$.
	Each point $(z,r')\in \RR^{4k}\times \RR$ gives a range to $R$,
	and two points in the same cell of $\AA$ give exactly the same range to $R$ because, for each $s\in S$,
	the surface $\lambda_s$ is above, below or on all the points of the cell.
	It may happen that points in different cells of $\AA$ give the same range, as one still has
	to check the condition $\forall i\in [k]:~ w'_s \cdot d_H(s,s_i(z)) \ge r'$.
	In any case, the number of cells in $\AA$ is an upper bound to the number of ranges in $R$,
	which is exactly the number of ranges in $\range(F_w)$.
	
	Classical results in Real Algebraic Geometry imply that $\AA$ has $|\Lambda|^{O(k)}$ cells;
	see for example~\cite[Chapter 7]{AlgRealAlgGeom06} or~\cite[Section 6.2]{Matousek02}.
	This implies that the so-called shattering dimension of $\range(F_w)$ is $O(k)=O(1)$.
	(See for example Har-Peled~\cite[Chapter 5]{HarPeled11} for the concept and the next property.)
	Since a range space has bounded shattering dimension if and only if it has bounded VC-dimension
	this implies that the VC-dimension of $\range(F_w)$ is $O(1)$.
	The approach we have used is essentially an application of the methodology
	discussed by Matou{\v s}ek \cite[Section 10.3]{Matousek02}.
	
	Note that in this proof we have not tried to optimize the bound on the VC-dimension
	because we assume $k$ is constant.
	Perhaps a better bound follows from adapting the result of Driemel at al.~\cite{Driemel21} 
	to the case of weights.
\end{proof}

We can now apply Theorem~\ref{CoresetThm} on $F$ to obtain the coreset.

\begin{theorem}
\label{thm:coresets}
	Assume that $k$ is a fixed positive integer. 
	Let $\delta,\eps$ be real values in $(0,1/2)$.
	For any set $S$ of $n$ unweighted segments in the plane, we can compute in time $O(n\log(1/\delta))$ a 
	subset $T\subseteq S$ of 
	\[
		O\left( \eps^{-2} \log \frac{1}{\delta} \right)
	\]
	segments and weights $u_s> 0$ for each $s\in T$ such that, 
	with probability at least $1-\delta$:
	\[
		\forall \text{ segments }s_1,\dots,s_k:~~~ \left| \cost_S(\{ s_1,\dots,s_k\})  - \cost_T(\{ s_1,\dots,s_k\})\right| ~\le~
		\eps \cdot \cost_S(\{ s_1,\dots,s_k\}).
	\]
\end{theorem}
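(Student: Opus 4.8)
The plan is to instantiate the Feldman--Langberg sampling framework of Theorem~\ref{CoresetThm} on the family $F=\{f_s\mid s\in S\}$, with ground dimension $d=4k$, feeding it the two ingredients it requires: a bound on the VC-dimension of the associated range spaces, and a valid set of sensitivity upper bounds together with a bound on their sum. The first ingredient is exactly Lemma~\ref{le:VC}, which guarantees that $\range(F_w)$ has VC-dimension $O(1)$ for every positive weighting $w$. The second is supplied by Lemma~\ref{le:sensitivity}, whose hypotheses only ask for \emph{some} bicriteria $(\alpha,\beta)$-approximation $s'_1,\dots,s'_{k'}$; given it, the values $\tilde\sigma(f_s)$ are valid sensitivity upper bounds, and the computation preceding this theorem shows $\tilde\Sigma(F)=O(\alpha\beta k)$. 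Since $S$ is unweighted, $\sum_{s\in S}f_s(z)=\cost_S(\{s_1(z),\dots,s_k(z)\})$, and for any reweighted subset $\sum_{s\in T}u_s f_s(z)=\cost_T(\{s_1(z),\dots,s_k(z)\})$; thus the additive--multiplicative guarantee of Theorem~\ref{CoresetThm}, once we let $z$ range over all of $\RR^{4k}$ so that $(s_1(z),\dots,s_k(z))$ ranges over all $k$-tuples of segments, translates directly into the stated uniform $(1\pm\eps)$-approximation of $\cost_S$ by $\cost_T$.

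Assembling these, I would use constants $\alpha,\beta=O(1)$, so $\tilde\Sigma(F)=O(k)=O(1)$ for fixed $k$. Plugging this into the size bound of Theorem~\ref{CoresetThm} yields a coreset of size $O(\eps^{-2}(\log O(1)+\log(1/\delta)))=O(\eps^{-2}\log(1/\delta))$, exactly as claimed. The subset $C\subseteq F$ returned corresponds, through the bijection $s\mapsto f_s$, to a subset $T\subseteq S$, and its function weights become the segment weights $u_s$.

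The main obstacle --- the only step not already handed to us by the earlier lemmas --- is producing the bicriteria $(\alpha,\beta)$-approximation with constant $\alpha,\beta$ inside the claimed time. Here I would use that $(\S,d_H)$ is a genuine metric space in which any pairwise distance is evaluated in $O(1)$ time via Equation~\eqref{eq:Hausdorff_segm}. Standard linear-time bicriteria algorithms for metric $k$-means --- such as $D^2$-sampling/adaptive seeding, which only ever query pairwise distances --- then produce $k'=O(k)$ centers among the input segments with $\cost_S(\{s'_1,\dots,s'_{k'}\})\le \alpha'\cdot\opt_k^{\mathrm{disc}}(S)$ for a constant $\alpha'$, where $\opt_k^{\mathrm{disc}}$ restricts centers to $S$. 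A triangle-inequality argument bridges to the continuous optimum: for each cluster $P$ of an optimal continuous solution, with center $\mu$, one has $\sum_{p\in P}d_H^2(p,q)\le 2\sum_{p\in P}d_H^2(p,\mu)+2|P|\,d_H^2(\mu,q)$, and choosing $q\in P$ nearest to $\mu$ gives $d_H^2(\mu,q)\le \tfrac{1}{|P|}\sum_{p\in P}d_H^2(p,\mu)$, whence $\sum_{p\in P}d_H^2(p,q)\le 4\sum_{p\in P}d_H^2(p,\mu)$; summing over clusters, $\opt_k^{\mathrm{disc}}(S)\le 4\,\opt_k(S)$. Hence the computed centers form a bicriteria $(\alpha,\beta)$-approximation against $\opt_k(S)$ with $\alpha=4\alpha'=O(1)$, so Lemma~\ref{le:sensitivity} applies. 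Since such seeding succeeds only with constant probability, I would run it $O(\log(1/\delta))$ times independently and keep the candidate of smallest cost (each cost evaluated in $O(n)$ time), boosting the success probability of this step to $1-\delta/2$ and accounting for the $O(n\log(1/\delta))$ running time.

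Finally, extracting the sensitivities from the chosen centers is routine and linear: assign each $s\in S$ to its nearest $s'_i$ among the $O(k)$ centers in $O(n)$ time, then compute each $|S'_i|$ and $\cost_{S'_i}(s'_i)$ and read off $\tilde\sigma(f_s)$ from the formula of Lemma~\ref{le:sensitivity}. Running Theorem~\ref{CoresetThm} with failure probability $\delta/2$ then costs $O(n)$, so the total time is $O(n\log(1/\delta))$, and a union bound over the two randomized steps gives the coreset guarantee with probability at least $1-\delta$. I expect the only genuinely delicate point to be the interface between the continuous $k$-means optimum appearing in Lemma~\ref{le:sensitivity} and the discrete approximation actually produced by the seeding algorithm, which the constant-factor bridge above resolves.
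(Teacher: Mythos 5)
Your proposal is correct and follows essentially the same route as the paper: a constant-factor bicriteria approximation feeds the sensitivity bounds of Lemma~\ref{le:sensitivity}, which together with the VC-dimension bound of Lemma~\ref{le:VC} are plugged into Theorem~\ref{CoresetThm}, with the discrete-to-continuous factor-$4$ bridge handled exactly as the paper notes. The only divergence is the choice of bicriteria subroutine --- you use $D^2$-sampling with probability boosting, whereas the paper invokes Chen's adaptation of Indyk's algorithm --- and both deliver the required $(O(1),O(1))$ guarantee in $O(n\log(1/\delta))$ time, so this is an implementation detail rather than a different argument.
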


\begin{proof}
We first compute a bicriteria $(\alpha=O(1),\beta=O(1))$-approximation for $k$-means on $S$ by using the algorithm of Chen~\cite[Theorem A.4]{Chen09}, which in turn is a modification of the algorithm by Indyk~\cite{Indyk99}.
For a probability of error $\delta'=\delta/2$, the algorithm takes $O(n\log(1/\delta'))=O(n\log(1/\delta))$ time
and succeeds with probability at least $1-\delta'$ in finding a set $s_1,\dots,s_{k'}$ of $k'=O(k)$ segments such that
$\cost_S(\{ s_1,\dots,s_{k'}\})\le O(1)\cdot \cost_S(\{ s_1,\dots,s_k\})$.
Note that the algorithm of Chen is for the discrete version of $k$-means, where the centers under consideration must be a subset of $S$. However, it is well-known that the triangle inequality implies that this is a factor $4$ off for the continuous $k$-means version. This factor $4$ is then subsumed by the $O(1)$ approximation factor.

Let $F=\{f_s \mid s\in S\}$. 
We use the bicriteria approximation for the sensitivity upper bounds $\tilde\sigma(f_s)$, for each $f_s\in F$, as defined in Lemma~\ref{le:sensitivity}. As discussed after Lemma~\ref{le:sensitivity}, the total sensitivity $\tilde\Sigma(F)$ is $O(1)$ and, by Lemma~\ref{le:VC}, the VC-dimension of $\range(F)$ is $O(1)$. The result then follows using Theorem~\ref{CoresetThm} with probability of error $\delta'=\delta/2$. The size of the set $C\subseteq F$ selected by Theorem~\ref{CoresetThm} is $O(\tilde\Sigma(F)\eps^{-2} (\log \tilde\Sigma(F) + \log(1/\delta')) = O(\eps^{-2} \log(1/\delta))$, and each function $f\in C$ has a given weight $u_f>0$. 
We set $T = \{ s\in S\mid f_s\in C \}$ and, for each segment $s\in T$, we define the weight $w_s:=u_{f_s}$.

With probability at least $1-\delta'$ we have 
\[
		\forall z\in\RR^{4k}:~~~ \left| \sum_{f\in F} f(z) - \sum_{f\in C}u_f\cdot f(z)\right| ~\le~
		\eps \sum_{f\in F} f(z),
\]
which can be rewritten as
\[
		\forall z\in\RR^{4k}:~~~ \left| \sum_{s\in S} f_s(z) - \sum_{s\in T}w_s\cdot f_s(z)\right| ~\le~
		\eps \sum_{s\in S} f_s(z).
\]
Since $f_s(z)= \min \{ d^2_H(s,s_1(z)),\dots, d^2_H(s,s_k(z))\}$ and $s_1(z),\dots,s_k(z)$ goes over all $k$ tuples of candidate segments when $z$ iterates over all $\RR^{4k}$, the last statement is equivalent to
\[
	\forall \text{ segments }s_1,\dots,s_k: ~~~ \left| \cost_S(\{ s_1,\dots,s_k\})  - \cost_T(\{ s_1,\dots,s_k\})\right| ~\le~
		\eps \cdot \cost_S(\{ s_1,\dots,s_k\}).
\]
The algorithm may fail only if the bicriteria approximation of Chen fails or if the application of Theorem~\ref{CoresetThm} fails, and each of them separately fails with probability at most $\delta/2$.
\end{proof}


\section{Putting it all together}

Let $S$ be a set of $n$ segments in the plane without weights.
We first set a fixed probability of error $\delta=1/2$, which means that the terms $\log(1/\delta)$ become $O(1)$.
We keep using $\eps\in (0, 1/2)$ as a parameter.

We first compute a weighted coreset $T\subseteq S$ with $|T| = O(\eps^{-2}))$ elements in $O(n)$ time as described in Theorem~\ref{thm:coresets}; for each segment $s\in T$ we have a weight $w_s>0$. 
If $S^* = \{s^*_1,\ldots,s^*_k\}$ is an optimal set of segments for $S$, then from Theorem~\ref{thm:coresets} we have that
$\cost_T(S^*)\leq (1+\eps) \cdot \cost_S(S^*)$ with probability at least $1/2$.

We apply the $(1+\eps)$-approximation algorithm of Theorem~\ref{thm:main_via_Vigneron} on $T$, taking into account the weights of the segments.
As $|T| = O(\eps^{-2})$ the algorithm runs in time
\[
	O\left( (\eps^{-2})^{8k -2+\eta} + (\eps^{-3})^{4k+1} \log^{4k+1} (\eps^{-3})\right) ~=~
	O\left(\eps^{-16k+4-\eta} + \eps^{-12k-3} \log^{4k+1} (\eps^{-1})\right)
\]
for any $\eta>0$.
When $k=1$, the second summand dominates.
When $k\ge 2$ and $\eps$ is below some constant $\eps_0$, the first summand dominates.

Let $T^* = \{t^*_1,\ldots,t^*_k\}$ be an optimal set of segments for $k$-means of the weighted set $T$.
The algorithm of Theorem~\ref{thm:main_via_Vigneron} has then provided a set $S_\eps = \{ s_{1,\eps},\dots,s_{k,\eps}\}$
of $k$ segments for which $\cost_T(S_\eps) \leq (1+\eps) \cdot \cost_T(T^*)$.
Note that for the set $S_\eps$ we also get from Theorem~\ref{thm:coresets} that $(1-\eps)\cdot \cost_S(S_\eps) \leq \cost_T(S_\eps)$. Since $\cost_T(T^*) \leq \cost_T(S^*)$, we conclude that 
\begin{align*}
(1-\eps)\cdot \cost_S(S_\eps) ~&\leq~ \cost_T(S_\eps) ~\leq~ (1+\eps) \cdot \cost_T(T^*) ~\leq~ (1+\eps) \cdot \cost_T(S^*)\\
~&\leq~ (1+\eps)^2 \cdot \cost_S(S^*)
\end{align*}
or 
\[
\cost_S(S_\eps) ~\leq~ \frac{(1+\eps)^2}{(1-\eps)} \cdot \cost_S(S^*) ~=~  (1+O(\eps)) \cdot \cost_S(S^*).
\]
Setting $\eps=\Theta(\eps')$ appropriately, we get a $(1+\eps')$-approximation for any desired $\eps'$.

By independently repeating the algorithm $O(\log (1/\delta))$ times and taking the best among the solutions,
we can reduce the probability of error to any given value $\delta$. Because $k=O(1)$, evaluating each candidate solution with respect to the whole set of segments takes $O(n)$ time. We summarize in the following.

\begin{theorem}
	Let $k$ a fixed, positive integer and let $\delta,\eps\in (0,1/2)$.
	Let $S$ be a family of $n$ unweighted segments in the plane.
	We can compute $k$ segments $s_{1,\eps},\dots,s_{k,\eps}$ in $\RR^2$ such that, with probability at least $1-\delta$,
\[
\cost_S(\{s_{1,\eps},\dots,s_{k,\eps}\}) ~\le~ (1+\eps) \min_{s_1,\dots,s_k} \cost_S(\{ s_1,\dots, s_k\})
\]
in time $O\left(\left(n+ \eps^{-16k+4-\eta} + \eps^{-12k-3} \log^{4k+1} (\eps^{-1})\right) (\log(1/\delta)\right)$, for any $\eta>0$.
\end{theorem}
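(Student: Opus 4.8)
The plan is to run the algebraic approximation of Theorem~\ref{thm:main_via_Vigneron} not on the full input but on the small weighted coreset produced by Theorem~\ref{thm:coresets}, and then to boost the success probability by independent repetition. First I would fix an intermediate failure probability of $1/2$ and call Theorem~\ref{thm:coresets}, which in $O(n)$ time returns a weighted subset $T\subseteq S$ with $|T|=O(\eps^{-2})$ that, with probability at least $1/2$, preserves the cost of \emph{every} $k$-tuple of segments up to a multiplicative $(1\pm\eps)$ factor.

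Next I would apply Theorem~\ref{thm:main_via_Vigneron} to the weighted set $T$. Since that theorem runs in time $O(n^{8k-2+\eta}+(n/\eps)^{4k+1}\log^{4k+1}(n/\eps))$ on $n$ weighted segments, substituting $|T|=O(\eps^{-2})$ (so that $|T|/\eps=O(\eps^{-3})$) yields the claimed bound $O(\eps^{-16k+4-\eta}+\eps^{-12k-3}\log^{4k+1}(\eps^{-1}))$. This produces a $k$-tuple $S_\eps$ with $\cost_T(S_\eps)\le(1+\eps)\cost_T(T^*)$, where $T^*$ is optimal for $T$.

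The core of the argument is then to chain the two approximation guarantees. Writing $S^*$ for an optimal solution on $S$, I would apply the coreset bound in its \emph{upper} direction to the fixed optimum $S^*$, giving $\cost_T(S^*)\le(1+\eps)\cost_S(S^*)$, use the optimality $\cost_T(T^*)\le\cost_T(S^*)$, and apply the coreset bound in its \emph{lower} direction to the computed solution $S_\eps$. Combining,
\begin{align*}
(1-\eps)\cost_S(S_\eps) ~&\le~ \cost_T(S_\eps) ~\le~ (1+\eps)\cost_T(T^*)\\
~&\le~ (1+\eps)\cost_T(S^*) ~\le~ (1+\eps)^2\cost_S(S^*),
\end{align*}
so that $\cost_S(S_\eps)\le\frac{(1+\eps)^2}{1-\eps}\cost_S(S^*)=(1+O(\eps))\cost_S(S^*)$; rescaling $\eps=\Theta(\eps')$ turns this into the desired $(1+\eps')$-approximation.

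Finally, to reach an arbitrary failure probability $\delta$, I would run the whole pipeline independently $O(\log(1/\delta))$ times and return the candidate of smallest cost, evaluating each candidate against the full input in $O(n)$ time, which is affordable since $k=O(1)$. At least one run succeeds with probability $1-\delta$, and taking the minimum can only help, giving the total running time $O((n+\eps^{-16k+4-\eta}+\eps^{-12k-3}\log^{4k+1}(\eps^{-1}))\log(1/\delta))$. The hard part is not any single step but the bookkeeping in the chaining: I must be careful to use the two sides of the coreset inequality on the \emph{right} segments---the upper bound on the unknown optimum and the lower bound on the algorithm's output---so that the two independent $\eps$-losses compose multiplicatively rather than additively.
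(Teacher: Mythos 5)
Your proposal is correct and follows essentially the same route as the paper: fix an intermediate failure probability of $1/2$, build the $O(\eps^{-2})$-size weighted coreset of Theorem~\ref{thm:coresets}, run Theorem~\ref{thm:main_via_Vigneron} on it, chain the two guarantees exactly as you write via $(1-\eps)\cost_S(S_\eps)\le\cost_T(S_\eps)\le(1+\eps)\cost_T(T^*)\le(1+\eps)\cost_T(S^*)\le(1+\eps)^2\cost_S(S^*)$, rescale $\eps$, and amplify by $O(\log(1/\delta))$ independent repetitions with an $O(n)$-time evaluation of each candidate. No gaps.
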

 
For $k=1$, the running time is $O\left(\left(n+ \eps^{-15} \log^{5} (\eps^{-1})\right) (\log(1/\delta)\right)$,
while for $k\ge 2$ the running time is $O\left(\left(n+ \eps^{-16k+4-\eta}\right) (\log(1/\delta)\right)$ for any $\eta>0$.

\section{Extension to polylines}
In this section we briefly discuss the extension of our result to the case of polylines of bounded complexity.
To reduce the number of parameters, we assume that each polyline has at most $\ell$ segments and we search the $k$-means among polylines that have at most $\ell$ segments. (We can also handle the case where the input and the target centers have different complexities.) To simplify the discussion, we assume that each input polyline has exactly $\ell$ segments. We further assume that $\ell=O(1)$.

We regard each polyline $\pi$ as the union of segments and note that
the distance between the polyline $\pi$ with segments $s_1,\dots, s_\ell$ and the polyline $\pi'$ 
with segments $s'_1,\dots, s'_\ell$ is 
\[
	d_H(\pi,\pi')~=~ \max\Bigl\{  \max_{i\in [\ell]}\min_{j\in [\ell]} d_H(s_i,s'_j),~ 
									\max_{j\in [\ell]}\min_{i\in [\ell]} d_H(s'_j,s_i) \Bigr\}.
\]
Therefore the distance between any two polylines is described as a max-min combination of $O(\ell^2)=O(1)$ values.

A polyline with $\ell$ segments is parameterized by $2(\ell+1)$ real values.
Therefore, a sequence of $k$ polylines with $\ell$ segments each is parameterized by a point in $\RR^\kappa$ for $\kappa=2k(\ell+1)$. (Before, for segments, we had $\kappa=4k$.) Each $z\in \RR^\kappa$ defines $k$ polylines 
$\pi_1(z),\dots,\pi_k(z)$, each consisting of $\ell$ segments.

Let $\Pi$ be a set of polylines in the plane, each with $\ell$ segments.
For each $\pi\in \Pi$, we define the function $f_\pi: \RR^\kappa \rightarrow \RR$ by
\[
	f_\pi(z) ~:=~ \min \{ d^2_H(\pi,\pi_1(z)),\dots, d^2_H(\pi,\pi_k(z))\} ~=~ 
				\bigl( \min \{ d_H(\pi,\pi_1(z)),\dots, d_H(\pi,\pi_k(z))\}\bigr)^2 
\]
and then define the set of functions $F=\{f_\pi\mid \pi\in \Pi\}$.

We first note that the VC-dimension of the range space $\range(F_w)$ is $O(1)$, where $F_w$ is obtained from $F$ by scaling each $f_\pi\in F$ with a different scalar $w_\pi>0$. 
The proof of Lemma~\ref{le:VC} readily applies to this case as it only relies on the description complexity of $d_H(\pi,\pi_i(z))$ being constant, and each patch of the description being an algebraic function. 

Next we note that we can use the bicriteria $(\alpha=O(1),\beta=O(1))$-approximation for $k$-means of Chen, as we did in the proof of Theorem~\ref{thm:coresets}. Indeed, this algorithm only requires that we can compute the distance between any two input objects, which we can do in constant time. The rest of the proof of Theorem~\ref{thm:coresets} goes unchanged because Lemma~\ref{le:sensitivity} and Theorem~\ref{CoresetThm} do not make any assumption related to segments beyond the VC-dimension. We thus obtain with probability at least $1/2$ a coreset $\tilde\Pi$ of $O(\eps^{-2})$ input polylines, each of them with a positive weight $w_\pi$.

It remains to adapt Theorem~\ref{thm:main_via_Vigneron} to the setting of polylines.
As we have done in Theorem~\ref{thm:nice_family}, for each polyline $\pi$ we can compute a family $\FF_\pi$ of nice functions such that 
\[
	f_\pi(z) ~=~ \sum_{f\in\FF_\pi}f(z) ~=~ \min_{i\in [k]} d^2_H(\pi,\pi_i(z)) ~~\text{ for all $z\in \RR^\kappa$.}
\]
Indeed, as we did in the proof of Theorem~\ref{thm:nice_family}, we can break the parameter space $\RR^\kappa$ using $O(k^2\ell^2)=O(1)$ algebraic hypersurfaces into $O(1)$ cells such that, within each cell, the max-max-min expression defining $d_H(\pi,\pi_i(z))$ is always the same algebraic expression. 
We can then apply Theorem~\ref{thm:Vigneron} to the family of nice functions $\cup_{\pi\in\tilde\Pi} \FF_\pi$, where each function in $\FF_\pi$ has been scales with the corresponding weight $w_\pi$.
Thus, we have an application of Theorem~\ref{thm:Vigneron} in $\RR^\kappa$ for $O(\eps^{-2})$ functions. The running time is, for any $\eta>0$,
\[ 
	O((\eps^{-2})^{2\kappa-2+\eta} + (\eps^{-3})^{\kappa+1} \log^{\kappa+1}(\eps^{-3})) ~=~ 
	O(\eps^{-4\kappa+4+\eta} + \eps^{-3\kappa-3} \log^{\kappa+1}(\eps^-1)) ~=~
	O(\eps^{-O(k\ell)}).
\]	
Like before, we can make $O(\log (1/\delta))$ independent repetitions to decrease the probability of failure to $\delta$.
We summarize below.

\begin{theorem}
	Let $k$ and $\ell$ be fixed, positive integers and let $\delta,\eps\in (0,1/2)$ be parameters.
	Let $\Pi$ be a family of $n$ unweighted polylines in the plane, each with at most $\ell$ segments.
	We can compute $k$ polylines $\pi_{1,\eps},\dots,\pi_{k,\eps}$ in $\RR^2$, each with at most $\ell$ segments,
	such that, with probability at least $1-\delta$,
\[
\cost_\Pi(\{\pi_{1,\eps},\dots,\pi_{k,\eps}\}) ~\le~ (1+\eps) \min_{\pi_1,\dots,\pi_k} \cost_\Pi(\{ \pi_1,\dots, \pi_k\})
\]
in time $O\left(\left(n+ \eps^{-O(k\ell)}\right) (\log(1/\delta)\right)$.
\end{theorem}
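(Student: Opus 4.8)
The plan is to follow exactly the three-stage pipeline developed for segments---coreset construction, then the algebraic optimization of Vigneron, glued by a triangle-inequality sandwich---and to verify that each stage survives the passage from segments to polylines of at most $\ell$ segments. First I would fix the parameterization: a polyline with $\ell$ segments is a point in $\RR^{2(\ell+1)}$, so a $k$-tuple of candidate centers lives in $\RR^\kappa$ with $\kappa=2k(\ell+1)$, and each $z\in\RR^\kappa$ decodes to polylines $\pi_1(z),\dots,\pi_k(z)$. The only structural fact I would need up front is that $d_H(\pi,\pi')$ is the max-min-max combination of the $O(\ell^2)$ pairwise segment distances $d_H(s_i,s'_j)$, each of which is algebraic of constant degree by the analysis preceding the definition of $\FF(ab,a'b')$ in Section~\ref{sec:algebraic}.

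For the coreset stage I would reuse Theorem~\ref{thm:coresets} essentially verbatim. The VC-dimension bound of Lemma~\ref{le:VC} only uses that $d_H(\pi,\pi_i(z))$ has constant description complexity with each patch an algebraic function, so the same arrangement argument now carried out in $\RR^{\kappa+1}$ gives shattering dimension $O(k\ell)=O(1)$ and hence VC-dimension $O(1)$. The bicriteria $(\alpha,\beta)=(O(1),O(1))$ approximation of Chen needs only that pairwise distances of input objects are computable in $O(1)$ time, which holds here; and neither Lemma~\ref{le:sensitivity} nor Theorem~\ref{CoresetThm} uses anything about segments beyond the VC bound, so the total sensitivity is again $O(k)=O(1)$. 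This yields, with probability at least $1/2$, a weighted coreset $\tilde\Pi$ of $O(\eps^{-2})$ input polylines in $O(n)$ time.

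The substantive step---and the one I expect to be the main obstacle---is adapting Theorem~\ref{thm:nice_family} so that $z\mapsto\min_{i\in[k]}d^2_H(\pi,\pi_i(z))$ is a sum over a nice family. For a fixed input polyline $\pi$ I would build an arrangement of $O(k^2\ell^2)=O(1)$ algebraic hypersurfaces in $\RR^\kappa$ that simultaneously resolves three nested selections: the $\sigma/\tau$-type hypersurfaces fixing which term realizes each point-to-segment distance $\delta(\cdot,\cdot)$, the bisector-type hypersurfaces fixing which of the $O(\ell^2)$ pairwise distances realizes the inner $\min$ and outer $\max$ defining each $d_H(\pi,\pi_i(z))$, and a further set of bisectors fixing which index $i\in[k]$ attains the outermost $\min$. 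The care required is that this is now a three-level nested max-min-max, rather than the single max of the segment case, so I must certify that within a single cell \emph{every} one of these selections is constant and therefore the whole expression collapses to one fixed algebraic function of bounded degree. Once that is done, multiplying by the cell indicator $1_c$ and summing over cells produces the nice family $\FF_\pi$ exactly as in Theorem~\ref{thm:nice_family}, with niceness constant $\lambda$ still $O(1)$ for fixed $k,\ell$.

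Finally I would apply Theorem~\ref{thm:Vigneron} to $\bigcup_{\pi\in\tilde\Pi}\FF_\pi$ (each function scaled by the coreset weight $w_\pi$) in dimension $\kappa$ with $m=O(\eps^{-2})$ functions, giving running time $O\bigl((\eps^{-2})^{2\kappa-2+\eta}+(\eps^{-3})^{\kappa+1}\log^{\kappa+1}(\eps^{-3})\bigr)=O(\eps^{-O(k\ell)})$ for any $\eta>0$. Combining the coreset guarantee of Theorem~\ref{thm:coresets} with the $(1+\eps)$ optimization guarantee via the same triangle-inequality sandwich used for segments turns the two one-sided errors into a single $(1+O(\eps))$ factor, and $O(\log(1/\delta))$ independent repetitions---each candidate verified against all of $\Pi$ in $O(n)$ time since $k\ell=O(1)$---drive the failure probability down to $\delta$, yielding the claimed bound $O\bigl((n+\eps^{-O(k\ell)})\log(1/\delta)\bigr)$.
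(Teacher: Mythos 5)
Your proposal follows the paper's own argument step for step: the same parameterization in $\RR^{\kappa}$ with $\kappa=2k(\ell+1)$, the same reuse of Lemma~\ref{le:VC}, Chen's bicriteria approximation, Lemma~\ref{le:sensitivity} and Theorem~\ref{CoresetThm} for the coreset, the same cell decomposition by $O(k^2\ell^2)$ algebraic hypersurfaces to adapt Theorem~\ref{thm:nice_family} to the nested max--min expression, and the same application of Theorem~\ref{thm:Vigneron} followed by the triangle-inequality sandwich and $O(\log(1/\delta))$ repetitions. The proof is correct and matches the paper's approach.
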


\bibliography{main}

\end{document}